\numberwithin{table}{section}
\numberwithin{figure}{section}
\numberwithin{equation}{section}
\definecolor{darkblue}{rgb}{.2, 0.2,.8}
\definecolor{darkgreen}{rgb}{0,0.5,0.3}
\definecolor{darkred}{rgb}{.8, .1,.1}
\newcommand{\x}{{x}}
\newcommand{\X}{{X}}
\newcommand{\Q}{\mathbb{Q}}
\newcommand{\R}{\mathbb{R}}
\newcommand{\E}{\mathbb{E}}
\renewcommand{\P }{{\mathbb P}}
\newcommand{\M}{\mathcal{M}}
\newcommand{\GG}{\Xi}
\newcommand{\pr}{\xi}
\newtheorem{lemma}{Lemma}[section]
\newtheorem{theorem}[lemma]{Theorem}
\newtheorem{definition}[lemma]{Definition}
\newtheorem{example}[lemma]{Example}
\newtheorem{remark}{Remark}[section]
\begin{document}
\title[Informed censoring]{Informed censoring: the parametric combination of data and expert information
}

\author[H. Albrecher]{Hansjörg Albrecher}
\address{Faculty of Business and Economics,
University of Lausanne,
Quartier de Chambronne,
1015 Lausanne,
Switzerland}
\email{hansjoerg.albrecher@unil.ch}

\author[M. Bladt]{Martin Bladt}
\address{Department of Mathematical Sciences, 
University of Copenhagen,
Universitetsparken 5, 
DK-2100 Copenhagen, 
Denmark}
\email{martinbladt@math.ku.dk}

\begin{abstract}
The statistical censoring setup is extended to the situation when random measures can be assigned to the realization of datapoints, leading to a new way of incorporating expert information into the usual parametric estimation procedures. 
The asymptotic theory is provided for the resulting estimators, and some special cases of practical relevance are studied in more detail. 
Although the {proposed framework} mathematically generalizes censoring and coarsening at random, and borrows techniques from M-estimation theory, it provides a novel and transparent methodology which enjoys significant practical applicability in situations where expert information is present. 
The potential of the approach is illustrated by a concrete actuarial application of tail parameter estimation for a heavy-tailed MTPL dataset with limited available expert information. 
\end{abstract}
\maketitle

\section{Introduction}\label{sec1}

There are many application areas of parametric statistics where there is some uncertainty about the actual size of some (or all) datapoints. In survival analysis, such an uncertainty appears naturally for censored data, see for instance \cite{leung1997censoring} and \cite{lesaffre}. 
In lack of further information,  the realization of each datapoint may then be considered equally likely at any point within the respective censoring interval, making a uniformity assumption natural, which is at the basis of the classical censoring literature.

There may, however, also be situations in which one has additional information on how likely the realization of a datapoint within an interval is. This may, for instance, 
be the case when one has to consider physical measurement errors in connection with the involved measurement devices or potentially accumulated rounding errors in the data gathering process leading to the eventually reported value.

In actuarial applications, in order to get a better understanding of the tail risk and in view of a limited amount of available datapoints, regulators may, for instance, ask for the generation of further adverse scenarios that lead to additional, more extreme outcomes in the dataset. Or, in the long-tailed insurance business, the settlement of claims (that then serve as datapoints) may take several or even many years, and in addition to a lower bound on the final value (which is typically the already paid amount), there may be an expert opinion on the final value (the `\textit{ultimate}') in terms of statistical loss development techniques, insight into the concrete case, or simply `experience'. Often the received value is then plainly added to the dataset as an additional datapoint, although one may have a concrete idea about the uncertainty of that point (for instance, the variance around such an estimate above may decrease with the number of years of the claim in the settlement process). Figure \ref{ultimates_vs_paid} illustrates these different possibilities to incorporate the information on a datapoint, where the third option is the one to be followed upon in the present paper (the yet unknown true value is depicted in grey, and the green line depicts another fully settled (closed) claim for illustration purposes).

\begin{figure}[h]
	\centering
	\includegraphics[clip, trim=5cm 5cm 18cm 5cm,width=0.39\textwidth]{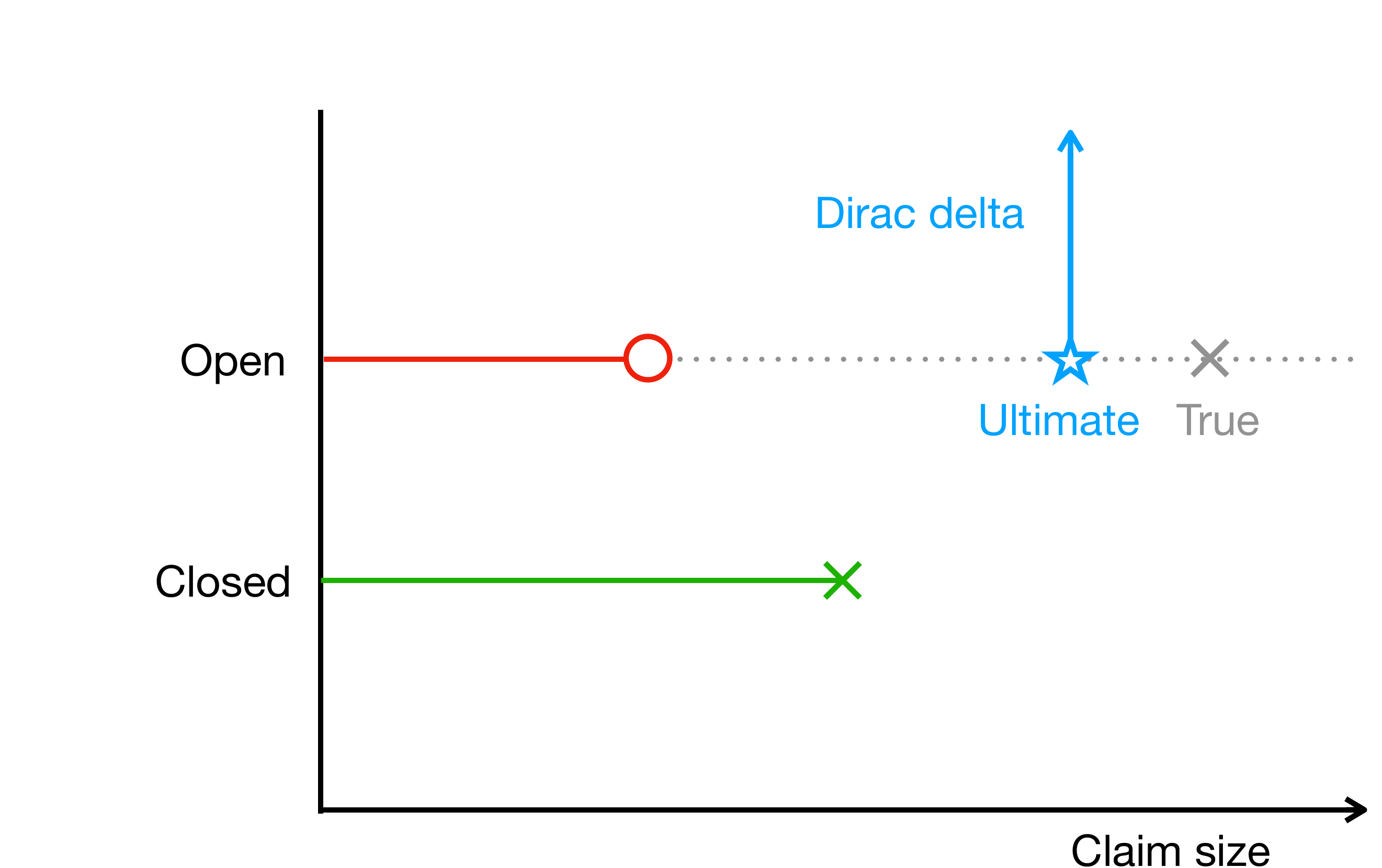}
	\includegraphics[clip, trim=15cm 5cm 18cm 5cm,width=0.29\textwidth]{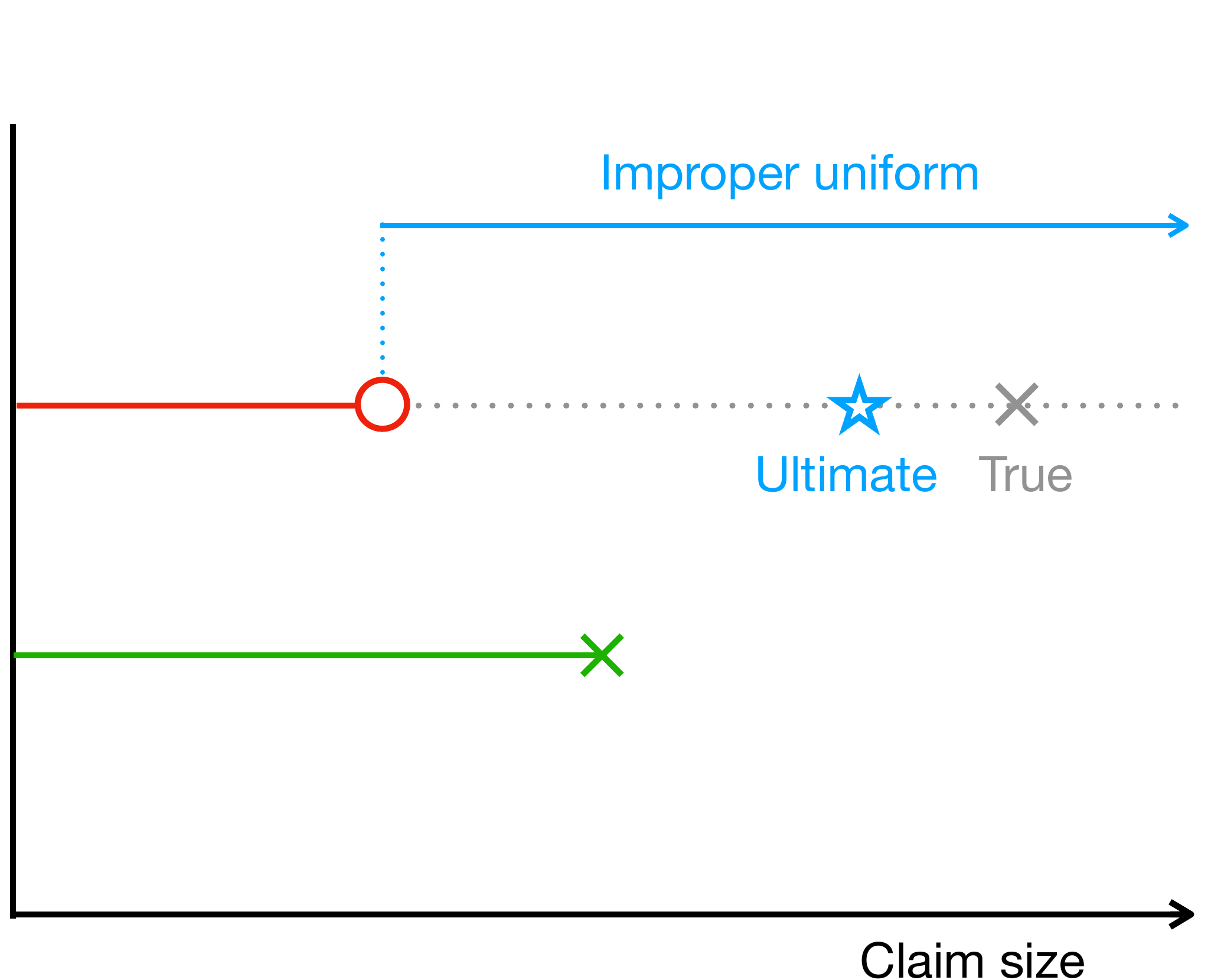}
	\includegraphics[clip, trim=15cm 5cm 18cm 5cm,width=0.29\textwidth]{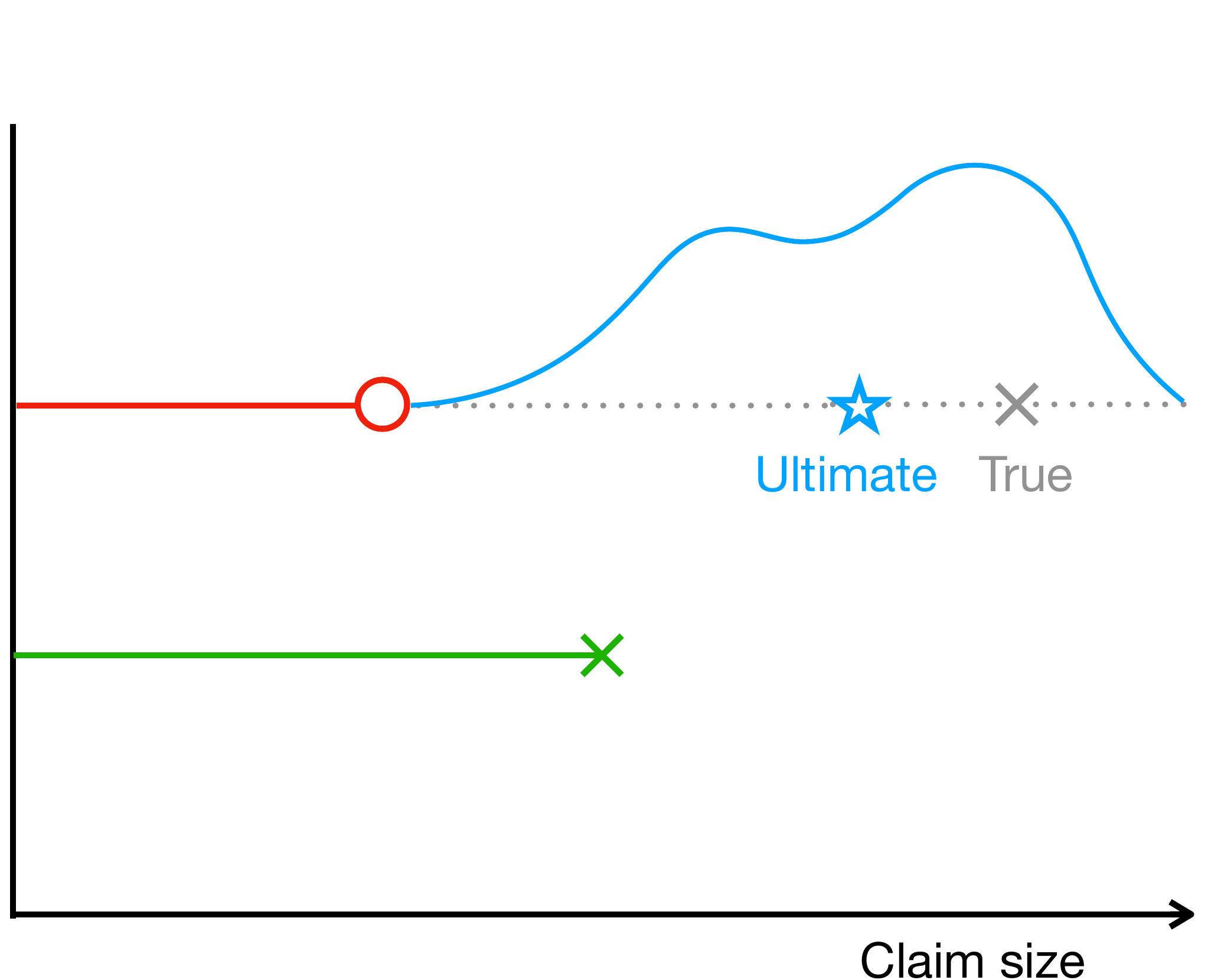}
	\caption{Three possibilities to include information on a datapoint in the actuarial dataset: Imputing the ultimate value (left), taking the already paid amount as the lower bound in an interval censoring approach (middle) or replacing the ultimate value by a distribution around its value (right).}
	\label{ultimates_vs_paid}
\end{figure}
While the above actuarial examples serve as a concrete motivation and framework for the questions addressed in this paper (and we will return to that field in the illustration section), the setup is general and has potential applications in other areas as well. Expert judgement is usually carried out by imputation or combination. The first of these approaches modifies the data before the analysis is made, while the second often modifies estimators after they have been constructed. 

In this paper we propose a formal way of incorporating uncertainty in the form of expert information or judgment which happens precisely at the estimation step. Concretely, we study how to incorporate information on the (possibly improper) `distribution' of the true value of an actual datapoint in the estimation process. This extends the classical statistical estimation literature by replacing each datapoint by a random measure. One can a priori imagine that there will be a certain trade-off between the `inaccuracy' of datapoints and the number of datapoints needed to lead to a comparable efficiency when compared to a situation with `exact' datapoints (Dirac-delta measures in the present setup). At the same time, the asymptotic theory in the sequel will show that the random measures introduce a bias in the estimation procedure so that one in general loses consistency of the resulting estimator. As we will show, this bias, however, can reduce the spread (mean square error) of the estimators considerably, so that -- like in many other fields of statistics -- such a tradeoff may be worthwhile (see e.g.\ a similar compromise in the context of Hill estimation for the extreme value index, cf.\ \cite{BGST2004}, and in terms of regularized regression in \cite{tibshirani1996regression}). This tradeoff can be particularly interesting in applications where a more accurate measurement of datapoints may be possible but costly (like in certain drug testing experiments) {or simply take too long (like in the actuarial application mentioned earlier)}. 
 
Apart from the above-mentioned practical situations in which such an approach can be useful, one may also be motivated by pure mathematical curiosity (and appetite for randomization), {namely} to what extent the corresponding statistical theory extends to the setup of random measures. It will turn out that the proof techniques in principle follow the lines of M- and Z- statistics (with the relaxation of some measurability assumptions). Yet the setting is different: when for instance comparing our setup with the one of mis-specified models in M-theory (see e.g.\ \cite[Example 5.25]{van2000asymptotic}), conceptually one may interpret that in the latter framework the model is `wiggled', introducing a systematic bias, whereas in our situation that wiggling applies to the datapoints used for estimating the parameters in a correctly specified model, {which also introduces a bias.}

We would like to note that random censoring (see e.g.\ \cite{worms2021estimation,wang2022estimation,goegebeur2023conditional}) is different in spirit to our current setup, but can easily be accommodated here as well.  Concretely, our theory contains the parametric likelihood of a randomly censored sample when the random measures have a very special form: having a constant improper density from the censoring upper limit onwards. The more general 'coarsening at random' (see \cite{heitjan1991ignorability}, and also \cite{heitjan1993ignorability} for special cases and applications to biomedical data) is also somewhat related. {\color{black}Their motivation and approach is, however, conceptually different from ours: they consider a random coarsening procedure of original datapoints (e.g.\ through rounding, heaping or grouping of datapoints) that one tries to disentangle, and while one can frame their analysis as a particular case of a random measure likelihood technique, the focus is on conditions under which such a coarsening can be ignored for the resulting estimates, exploiting the structure of related random sets. In contrast, we study the consequences of intrinsic randomness (of any kind) of the true values of datapoints (not necessarily linked to a particular causal structure behind that randomization), which is a general setting and in particular allows an interpretation and formalization of expert information that can refine the knowledge of datapoints beyond the classical coarsening or censoring mechanisms. }

{\color{black} Related contributions to the actuarial literature in terms of expert information or judgment are \cite{ Tredger2016}, \cite{combined} and \cite{Bladt2022}. The first reference aims at aiding the modeller to distinguish between low-quality and high-quality information, while the latter two focus on perturbed likelihood and Kaplan--Meier estimation, respectively. None of these contributions carries out an abstract and general treatment of expert information as a fully flexible random measure to be included in the likelihood, which is a main advantage of our proposed method.  In clinical trials, incomplete event adjudication has been treated in \cite{CookKosorok2004}, where some subjects in the dataset might have an uncertain status. In that case, consistent estimators are available when one assigns suitable weights to each unadjudicated observation, but the procedure only deals with uncertainty in the binary status variable, and once again does not allow for a full distributional imputation when more precise information is available. We would also like to mention the use of expert information in medicine in \cite{JULIAFLORES2011181} using a combination of Bayesian Networks and expert elicitation (cf.\ also \cite{quigley2021characteristics,bojke2021developing}), and in dependence modelling in \cite{arbenz2012estimating} using a Bayesian approach to incorporate expert information into copula estimation to improve precision, see also \cite{ashcroft2016expert} for a general discussion on the topic. An appropriate framework for the quantitative inclusion of expert information in models is also relevant in other fields with far-reaching impact, like the study of effects of climate change, see e.g.\ \cite{Mach2017,hurlbert2019risk}. 

}

The remainder of the paper is structured as follows. In Section \ref{sec2} we introduce the general setup and approach for the informed censoring studied in this paper. In particular, we define the appropriate generalized maximum likelihood estimators for the parameter(s) of interest. Section \ref{sec3} then provides the asymptotic theory of the resulting estimators, with the proofs delegated to the appendix. In Section \ref{sec:dens_rand}, we discuss the case of proper densities for the datapoint locations in more detail and provide two concrete examples for which the asymptotic bias and variance can be calculated explicitly. We then also suggest a way to measure the asymptotic efficiency of considering the available information relative to the situation of knowning the exact values of these datapoints. This last point leads to a possibility to quantitatively tradeoff the uncertainty of the datapoints to the associated sample size required to achieve the same efficiency in a fully observed scenario. Section \ref{sec5} then deals with the case of improper randomizations of datapoints, which plays an important role when generalizing the survival likelihood, and is a key component for the empirical analysis to follow. In Section \ref{sec6} we then apply the proposed approaches to the concrete actuarial application discussed above, highlighting the potential of the method. Section \ref{seccon} concludes.

\section{The setup}\label{sec2}


In the sequel, unless stated otherwise, $$f:\GG\times\mathbb{R}^d\to\mathbb{R}_+,\quad \GG\subset \mathbb{R}^p,$$ will be a parametric multivariate density function, that is, $\int_{\R^d}f(c,\x)d\x=1$, for any parameter $c\in\GG$. We will often write $f_{c}(\x)$ for  $f(c,\x)$, or simply $f(\x)$ when the role of $c$ is not important. When $d=1$, that is, the univariate case, the associated cumulative distribution function will be denoted by $F(x)=\int_{-\infty}^xf(y)dy$. The parametric estimation of $f$ is assumed to be of primary concern.

Let the support of $f$ be given by $\mathcal{S}=\mbox{supp}(f)=\bigcap\{\overline{A}\in \mathcal{B}(\R^d)\mid \int_{A^c}f(\x) d\x=0\}$. All random elements in the sequel are defined on a common probability space $(\Omega,\mathcal{F},\mathbb{P})$, which by extension can be done so without loss of generality. 

We now introduce the concept of informed censoring in an abstract manner. Examples in subsequent sections will illustrate how this concept bridges statistical methods with expert information. 


%

\begin{definition}[Informed censored data]
A family of independent random measures $\mathcal{M}=\{\mu_k\}_{k=1,\dots,n}$, each with support $\mathcal{S}_k\subset \mathcal{S}$, will be called informed censored data. The associated measures
\begin{align}\label{def:qmeasures}
d\mathbb{Q}_k:= f_{c} d\mu_k,\quad k=1,\dots,n,
\end{align}
are called the generalized density. Similarly, $\Q:=\Q_1\times\cdots\times \Q_n$ is the generalized likelihood.
\end{definition}

We are interested in deriving properties of carrying out estimation from the generalized likelihood in the presence of informed censored data. Taking logarithms, we use the following convenient notation:
\begin{align}\label{generalized_log_lik}
\ell_n(c\,|\M)=\log\Q(\X_k\in\mathcal{S}_k,\:k=1,\dots,n)
&=\sum_{k=1}^n\log\int_{\mathcal{S}_k} f_{c}(\x) d\mu_k(\x)
\end{align}
for the generalized log-likelihood, emphasizing that we aim at estimating the parameter $c$, given the sample $\M$. As in classical theory, $\ell_n$ will be treated as a random element when deriving asymptotic properties, but in practice, we will only observe a single realization.

Define the following random variable, which plays a key role in the sequel $$W_c=W_c^{(1)}:=-\log\int_{S_1}f_c(x)d\mu_1(x).$$ Then, if $\E[W_c]$ is minimized at the value $\pr$, that is if
$$\pr=\arg\min_{c\in\GG} \E[W_c],$$
then by differentiating under the integral (and assuming $c\mapsto f_c(x)$ is differentiable), we obtain that, for the random variable $$Z_\pr:=-\frac{d}{dc}\log\int_{S_1}f_c(x)d\mu_1(x)\big |_{c=\pr}$$ the following equality holds:
\begin{align}\label{z_zero}\E[Z_\pr]=0.\end{align}
This motivates the following definition:
\begin{definition}[Generalized maximum likelihood estimators]
For the informed censored data $\{\mu_k\}_{k=1,\dots,n}$ and associated variables $W_\pr^{(k)},Z_\pr^{(k)}$, $k=1,\dots,n$, we define
\begin{align*}
&\hat\pr_n=\arg\min_{c\in\GG} \sum_{k=1}^nW^{(k)}_c,\\
& \mbox{and}\;\tilde\pr_n\; \mbox{as the solution to}: \quad \sum_{k=1}^n Z_{\tilde\pr_n}^{(k)}=0,
\end{align*}
whenever such a solution exists in $\GG$. 
\end{definition}
Thus, when it exists (subject to differentiability) and is unique, then $\tilde\pr_n=\hat\pr_n$. Note that the second specification if the finite-sample version of Equation \eqref{z_zero}. 

One does not expect $\hat\pr_n=\arg\min \ell_n(c\,|\M)$ to converge to the real parameter of the underlying distribution. Instead, under certain conditions we will show that it converges to a biased estimate $\pr$, which compromises a purely statistical setup, adding the expert information contained in the $\mu_k$ measures. The resulting asymptotic standard errors will take into account this external information and update the classical formulas accordingly.

Since expert information is not always perfect or statistically coherent, the bias might not be negligible. However, implementing the expert information can lead to considerable variance gains such that the MSE is reduced, see the illustrations later in the paper.


\section{Asymptotic behaviour}\label{sec3}

This section adapts the asymptotic theory found in \cite{van2000asymptotic,kunsch1997mathematische,svg} which are originally tailored for M- and Z- statistics. The main difference in our case is that the random measure $\mu_k$ need not be measurable with respect to $X_k$ (where $(X_k)_{k=1}^n$ is an iid sample from $f$) as it may be a function of $X_k$ and other noise variables, or in some cases not even of $X_k$. However, the proofs nonetheless follow closely, showing that the methodology in the cited references allows for the analysis of more general scenarios without much added complexity.

We begin by giving a definition which guarantees that outside any small neighborhood of the optimal $\pr$, the (population) loss function does not come too close to the optimal value.

\begin{definition}[Well-separated]
We say that the minimizer $\pr\in\GG$ of $\E[W_c]$ is well-separated if for any $\varepsilon>0$,
\begin{align}\label{def:well_sep}
\inf_{c\in\GG,\:||c-\pr||>\varepsilon}\{\E[W_c]\}>\E[W_\pr].
\end{align}
\end{definition}

We may then use the latter condition to prove the consistency of our generalized maximum likelihood estimator.

\begin{theorem}[Consistency]\label{consistency_theo}
Let $\GG$ be a compact set, and assume that the trajectories $c\mapsto W_c(\omega)$ are continuous for all $\omega\in A$ with $\P(A)=1$. Then, subject to the integrability condition $\E\left[\sup_{c\in\GG}|W_c|\right]<\infty,$ we obtain that, as $n\to\infty$,
\begin{align*}
\E[W_{\hat\pr_n}]\to\E[W_{\pr}].
\end{align*}
Furthermore, if $\pr$ is well-separated, then almost surely
\begin{align*}
\hat\pr_n\to\pr.
\end{align*}
\end{theorem}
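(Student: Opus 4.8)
The plan is to read $\hat\pr_n$ as a classical M-estimator for the criterion $M_n(c):=\frac{1}{n}\sum_{k=1}^n W_c^{(k)}$ with population counterpart $M(c):=\E[W_c]$, and to run the Wald-type consistency argument (cf.\ \cite[Thm.\ 5.7]{van2000asymptotic}). Everything hinges on a uniform law of large numbers,
$$\sup_{c\in\GG}\bigl|M_n(c)-M(c)\bigr|\xrightarrow{\text{a.s.}}0,\qquad n\to\infty,$$
after which both assertions of the theorem follow by soft arguments. Note that the $W_c^{(k)}$ are i.i.d.\ copies of $W_c$ defined on the common probability space, so the ordinary strong law applies to each fixed $c$; the work is in upgrading this to uniformity over $\GG$.

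First I would establish the uniform SLLN from the three hypotheses (compactness of $\GG$, a.s.\ continuity of $c\mapsto W_c$, and the integrable envelope $\E[\sup_{c\in\GG}|W_c|]<\infty$) by a bracketing/covering argument. For $c_0\in\GG$ and $\rho>0$ set $\overline{W}_{c_0,\rho}:=\sup_{c\in B(c_0,\rho)}W_c$ and $\underline{W}_{c_0,\rho}:=\inf_{c\in B(c_0,\rho)}W_c$. By a.s.\ continuity, $\overline{W}_{c_0,\rho}\downarrow W_{c_0}$ and $\underline{W}_{c_0,\rho}\uparrow W_{c_0}$ as $\rho\downarrow0$, and since both are dominated by the integrable envelope, dominated convergence gives $\E[\overline{W}_{c_0,\rho}]\downarrow M(c_0)$ and $\E[\underline{W}_{c_0,\rho}]\uparrow M(c_0)$ (so $M$ is continuous as well). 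Hence, given $\eta>0$, each $c_0$ admits a radius $\rho(c_0)$ with $\E[\overline{W}_{c_0,\rho}]-\E[\underline{W}_{c_0,\rho}]<\eta$. Extracting a finite subcover $B(c_1,\rho_1),\dots,B(c_m,\rho_m)$ of the compact set $\GG$, the ordinary SLLN applied to the finitely many variables $\overline{W}_{c_j,\rho_j},\underline{W}_{c_j,\rho_j}$ sandwiches $M_n(c)$, for $c$ in each ball, between empirical averages converging to within $\eta$ of $M(c)$, uniformly in $c$; letting $\eta\downarrow0$ along a sequence yields the claim.

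With the uniform SLLN in hand, the first conclusion is a three-line sandwich: since $\hat\pr_n$ minimizes $M_n$ and $\pr$ minimizes $M$,
$$M(\pr)\le M(\hat\pr_n)\le M_n(\hat\pr_n)+\sup_{c\in\GG}|M_n-M|\le M_n(\pr)+\sup_{c\in\GG}|M_n-M|\le M(\pr)+2\sup_{c\in\GG}|M_n-M|,$$
and the right-most term vanishes a.s., so $\E[W_{\hat\pr_n}]=M(\hat\pr_n)\to M(\pr)=\E[W_{\pr}]$. For the second conclusion I would invoke well-separation: fixing $\varepsilon>0$ and setting $\delta:=\inf_{c\in\GG,\,\|c-\pr\|>\varepsilon}M(c)-M(\pr)>0$, the convergence $M(\hat\pr_n)\to M(\pr)$ forces $M(\hat\pr_n)<M(\pr)+\delta$ eventually, which is incompatible with $\|\hat\pr_n-\pr\|>\varepsilon$; as $\varepsilon$ was arbitrary, $\hat\pr_n\to\pr$ almost surely.

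The main obstacle is the uniform SLLN, and inside it the measurability bookkeeping flagged in the text: because $\mu_k$ need not be $X_k$-measurable, one must check that $\sup_{c\in\GG}|M_n(c)-M(c)|$ and the bracketing functions $\overline{W}_{c_0,\rho}$ are genuinely measurable before invoking dominated convergence and the SLLN. This is exactly where compactness pays off a second time: $\GG$ is separable, so a.s.\ continuity lets one replace each supremum over $\GG$ (or over a ball) by a supremum over a fixed countable dense subset, restoring measurability and legitimizing the steps above. Everything else is routine.
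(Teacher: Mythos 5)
Your proof is correct and follows essentially the same route as the paper: a Wald-type covering argument establishing a uniform strong law over the compact set $\GG$ (the paper uses the modulus of continuity $w(\delta,c)=\sup_{\|c-c_1\|<\delta}|W_c-W_{c_1}|$ where you use the equivalent two-sided brackets $\overline{W}_{c_0,\rho},\underline{W}_{c_0,\rho}$, both controlled via dominated convergence under the integrable envelope), followed by the same sandwich inequality for $\E[W_{\hat\pr_n}]\to\E[W_\pr]$ and the same well-separation argument for $\hat\pr_n\to\pr$. Your closing remark on restoring measurability of the suprema via separability and a countable dense subset is a point the paper's proof leaves implicit, and is a worthwhile addition.
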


The intermediate value theorem can help to remove the compactness assumption when we are dealing with one-dimensional parameters since in that case, there is a total order.

\begin{theorem}[One-dimensional consistency]\label{one_dim_theo}
Let $\GG\subset \R$, assume that $c\mapsto Z_c$ is almost surely continuous, $\E[|Z_c|]<\infty$, for any $c\in\GG$, and that there exists $\delta>0$ such that 
\begin{align*}
&\E[Z_c]>0,\quad c\in(\pr,\pr+\delta)\\
&\E[Z_c]<0,\quad c\in(\pr-\delta,\pr).
\end{align*}
Then for $n$ sufficiently large, there is almost surely a consistent solution $\tilde\pr_n$ to $\sum_{k=1}^nZ_{\tilde\pr_n}^{(k)}=0$.
\end{theorem}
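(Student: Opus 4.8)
The plan is to combine the strong law of large numbers, applied pointwise at two parameter values straddling $\pr$, with the intermediate value theorem. The almost sure continuity of $c\mapsto Z_c$ entails that for each $n$ the random function
$$
c\mapsto \Psi_n(c):=\sum_{k=1}^n Z_c^{(k)}
$$
is almost surely continuous, being a finite sum of almost surely continuous functions. Since $\GG\subset\R$ carries a total order, a sign change of $\Psi_n$ across an interval produces a zero of $\Psi_n$ inside it, so the task reduces to locating such a sign change in an arbitrarily small neighbourhood of $\pr$.

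First I would fix $\eta\in(0,\delta)$ and examine the endpoints $\pr-\eta$ and $\pr+\eta$. At each of these fixed values the summands $Z^{(k)}_{\pr\pm\eta}$ are i.i.d.\ with finite mean (by $\E[|Z_c|]<\infty$), so the strong law gives
$$
\tfrac1n\Psi_n(\pr+\eta)\to\E[Z_{\pr+\eta}]>0
\quad\text{and}\quad
\tfrac1n\Psi_n(\pr-\eta)\to\E[Z_{\pr-\eta}]<0
$$
almost surely, invoking exactly the one-sided sign hypotheses. Hence almost surely there is a threshold $N_\eta$ so that $\Psi_n(\pr+\eta)>0$ and $\Psi_n(\pr-\eta)<0$ for all $n\ge N_\eta$. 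On the same probability-one event on which $\Psi_n$ is continuous, the intermediate value theorem then furnishes a point $\tilde\pr_n\in(\pr-\eta,\pr+\eta)$ with $\Psi_n(\tilde\pr_n)=0$.

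To upgrade this to genuine consistency I would fix a countable sequence $\eta_m\downarrow0$ and intersect the corresponding probability-one events; the intersection still has probability one. On it, for each $m$ there is $N_m$ with a root of $\Psi_n$ lying in $(\pr-\eta_m,\pr+\eta_m)$ whenever $n\ge N_m$. Setting $m(n)=\max\{m:N_m\le n\}$, which tends to infinity as $n\to\infty$, and letting $\tilde\pr_n$ be a root in the window $(\pr-\eta_{m(n)},\pr+\eta_{m(n)})$, I obtain $|\tilde\pr_n-\pr|<\eta_{m(n)}\to0$, that is $\tilde\pr_n\to\pr$ almost surely.

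The main obstacle is the bookkeeping around the order of quantifiers rather than any single estimate: the threshold $N_\eta$ depends on both $\eta$ and $\omega$, so one cannot simply send $\eta\to0$ after fixing $n$. Working with a fixed decreasing sequence $\eta_m$ and a single intersection of countably many almost-sure events is what legitimises the passage to the limit. A secondary technical point is to select the root measurably, so that $\tilde\pr_n$ is genuinely a random variable; this follows from a standard measurable-selection argument for the zero set of a continuous random function, for instance by taking the infimum of that set within the relevant window.
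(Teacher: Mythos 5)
Your proof follows the same route as the paper's: the strong law of large numbers applied at the two points $\pr\pm\eta$ straddling $\pr$ to obtain a sign change, followed by the intermediate value theorem on the almost surely continuous function $c\mapsto\sum_k Z_c^{(k)}$. In fact you are somewhat more careful than the paper, which leaves implicit both the diagonalization over a countable sequence $\eta_m\downarrow 0$ needed to conclude $\tilde\pr_n\to\pr$ and the measurable selection of the root.
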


In practice, the discrepancy between $\pr$ and the true parameter to estimate is the price to pay for the imputation of expert information. The gain, however, comes from a potential reduction of variance.

For proving asymptotic normality, we introduce the following empirical process, indexed in $\GG$:
\begin{align*}
\nu_n(c)=\sqrt{n}\left( \frac1n\sum_{k=1}^n Z_c^{(k)}-\E[Z_c] \right),\quad c\in\GG.
\end{align*}
\begin{definition}[Asymptotic continuity]
The empirical process $\{\nu(c)\}_{c\in\GG}$ is asymptotically continuous if for all sequences $\{\pr_n\}\subset \GG$ with $\P(||\pr_n-\pr||>\varepsilon)\to0$ as $n\to \infty$, $\forall \varepsilon>0$, we have
$$\P(||\nu_n(\pr_n)-\nu_n(\pr)||>\varepsilon)\to0,\quad \forall \varepsilon>0.$$
\end{definition}
Asymptotic continuity is a well-known assumption that is effective at proving asymptotic normality, but particularly difficult to verify. We first state a general theorem and afterwards proceed to prove a more practical one with some rather restrictive assumptions. It is also worth mentioning that we are by no means stating the least restrictive result possible, but rather those whose proofs can be kept to a decent length.

Recall thaht $p$ is the number of parameters to be estimated. We assume that the $p\times p$ matrix $$M_\pr=\frac{\partial}{\partial c^T}\E[Z_c]\big|_{c=\pr}$$ exists and is of full rank (thus invertible). This is of course an unverifiable assumption, since $f$ is unkown and hence so is $\pr$. However, similarly to classical likelihood theory, we are in practice satisfied if $M_{\hat\pr_n}$ exists and is of full rank.
 
\begin{theorem}[Asymptotic normality via asymptotic continuity]\label{Asymp_norm_1}
Let $\tilde \pr_n$ be a consistent estimator of $\pr$, and let $\nu_n$ be asymptotically continuous at $\pr$. Define
\begin{align}\label{var_expression}
J_\pr=\E\left[ Z_\pr Z_\pr^T \right],\quad V_\pr=M_\pr^{-1}J_\pr [M_\pr^{-1}]^T.
\end{align}
Then
\begin{align*}
\sqrt{n}(\tilde\pr_n-\pr)\stackrel{d}{\to}\mbox{N}(0,V_\pr).
\end{align*}
\end{theorem}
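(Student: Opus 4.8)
The plan is to linearize the defining equation $\sum_{k=1}^n Z_{\tilde\pr_n}^{(k)}=0$ around the target value $\pr$ and thereby express $\sqrt{n}(\tilde\pr_n-\pr)$ as an asymptotically linear functional of the empirical process evaluated at $\pr$. Dividing the estimating equation by $\sqrt{n}$ and recognizing the definition of $\nu_n$, I would first write
\begin{align*}
0=\frac{1}{\sqrt{n}}\sum_{k=1}^n Z_{\tilde\pr_n}^{(k)}=\nu_n(\tilde\pr_n)+\sqrt{n}\,\E[Z_c]\big|_{c=\tilde\pr_n}.
\end{align*}
The entire argument then reduces to replacing each of the two terms on the right by its first-order approximation at $\pr$.

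First I would handle the stochastic term. Since $\tilde\pr_n$ is consistent by hypothesis, $\P(\|\tilde\pr_n-\pr\|>\varepsilon)\to0$ for every $\varepsilon>0$, so the asymptotic continuity of $\nu_n$ at $\pr$ yields $\nu_n(\tilde\pr_n)-\nu_n(\pr)=o_P(1)$. Next I would expand the deterministic term: using $\E[Z_\pr]=0$ from \eqref{z_zero} together with the existence of the full-rank derivative $M_\pr$, a first-order Taylor expansion gives
\begin{align*}
\sqrt{n}\,\E[Z_c]\big|_{c=\tilde\pr_n}=M_\pr\,\sqrt{n}(\tilde\pr_n-\pr)+o_P\big(\sqrt{n}\,\|\tilde\pr_n-\pr\|\big).
\end{align*}
Finally, since the $Z_\pr^{(k)}$ are i.i.d.\ with mean zero and covariance $J_\pr=\E[Z_\pr Z_\pr^T]$, the multivariate central limit theorem gives $\nu_n(\pr)=n^{-1/2}\sum_{k=1}^n Z_\pr^{(k)}\stackrel{d}{\to}\mbox{N}(0,J_\pr)$, and in particular $\nu_n(\pr)=O_P(1)$. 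Substituting these three facts into the rearranged equation and inverting $M_\pr$ produces $\sqrt{n}(\tilde\pr_n-\pr)=-M_\pr^{-1}\nu_n(\pr)+o_P(1)$, from which the limit $\mbox{N}(0,M_\pr^{-1}J_\pr[M_\pr^{-1}]^T)=\mbox{N}(0,V_\pr)$ follows by Slutsky's theorem.

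The main obstacle I anticipate is the customary circularity in controlling the Taylor remainder: the bound $o_P(\sqrt{n}\,\|\tilde\pr_n-\pr\|)$ is only negligible once one knows that $\sqrt{n}(\tilde\pr_n-\pr)$ is bounded in probability, yet that very rate is what the linearization is meant to establish. I would break this circularity in the standard way, by first writing the rearranged equation as $(M_\pr+o_P(1))\sqrt{n}(\tilde\pr_n-\pr)=-\nu_n(\pr)+o_P(1)=O_P(1)$; since $M_\pr$ is invertible, its smallest singular value stays bounded away from zero for large $n$, forcing $\sqrt{n}(\tilde\pr_n-\pr)=O_P(1)$, after which the remainder genuinely becomes $o_P(1)$ and the displayed linearization is justified. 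A secondary point worth isolating is that the i.i.d.\ structure must survive the randomization: although the measures $\mu_k$ need not be measurable functions of $X_k$ alone, the pairs generating $Z_\pr^{(k)}$ are drawn independently and identically, so the summands remain i.i.d.\ and the central limit theorem applies without modification.
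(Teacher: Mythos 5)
Your proposal is correct and follows essentially the same route as the paper's proof: linearize the estimating equation at $\pr$, use asymptotic continuity to replace $\nu_n(\tilde\pr_n)$ by $\nu_n(\pr)$, use the derivative $M_\pr$ on the deterministic part, deduce tightness of $\sqrt{n}(\tilde\pr_n-\pr)$ from the resulting identity, and conclude via the central limit theorem and Slutsky. Your explicit resolution of the circularity in bounding the Taylor remainder is in fact slightly more careful than the paper's terse assertion of uniform tightness, but it is the same argument.
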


We now provide a theorem that is more restrictive but perhaps easier to verify for practical situations in which the derivatives of $Z_c$ are available.

\begin{theorem}[Asymptotic normality via Lipschitz continuity]\label{asymt_norm_lips}
Let $\tilde \pr_n$ be a consistent estimator of $\pr$, and assume that  in a fixed neighborhood of $\pr$, $c\mapsto Z_c$ is almost surely differentiable with $Z'_c:=\frac{\partial}{\partial c^T}Z_c$ and in matrix norm
\begin{align*}
||Z'_c-Z'_{c_1} ||\le H||c-c_1||
\end{align*}
for an integrable random variable $H$. Then
\begin{align*}
\sqrt{n}(\tilde\pr_n-\pr)\stackrel{d}{\to}\mbox{N}(0,V_\pr),
\end{align*}
with $V_\pr$ as in \eqref{var_expression}.
\end{theorem}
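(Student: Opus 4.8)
The plan is to linearise the estimating equation $\sum_{k=1}^n Z_{\tilde\pr_n}^{(k)}=0$ around the target value $\pr$ and to extract the limit law of $\sqrt n(\tilde\pr_n-\pr)$ from a central limit theorem for the score $Z_\pr$ together with a law of large numbers for its derivative. Since the Lipschitz hypothesis is only assumed on a fixed neighbourhood of $\pr$, of radius $\rho$ say, I would first restrict attention to the event $A_n=\{\|\tilde\pr_n-\pr\|<\rho\}$; by the assumed consistency of $\tilde\pr_n$ we have $\mathbb{P}(A_n)\to1$, so it is enough to prove the convergence on $A_n$, where all expansions below are legitimate because $c\mapsto Z_c$ is almost surely differentiable there.

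On $A_n$, the fundamental theorem of calculus applied to each summand gives the exact first-order expansion
\begin{align*}
Z_{\tilde\pr_n}^{(k)}=Z_\pr^{(k)}+\Big(\int_0^1 (Z_{\pr+t(\tilde\pr_n-\pr)}^{(k)})'\,dt\Big)(\tilde\pr_n-\pr),
\end{align*}
and summing over $k$, dividing by $\sqrt n$ and using $\sum_k Z_{\tilde\pr_n}^{(k)}=0$ yields
\begin{align*}
0=\frac1{\sqrt n}\sum_{k=1}^n Z_\pr^{(k)}+A_n\,\sqrt n(\tilde\pr_n-\pr),\qquad A_n:=\frac1n\sum_{k=1}^n\int_0^1 (Z_{\pr+t(\tilde\pr_n-\pr)}^{(k)})'\,dt.
\end{align*}
The crucial step is to show $A_n\stackrel{\mathbb P}{\to}M_\pr$. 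Writing
\begin{align*}
A_n=\frac1n\sum_{k=1}^n (Z_\pr^{(k)})'+\frac1n\sum_{k=1}^n\Big(\int_0^1 (Z_{\pr+t(\tilde\pr_n-\pr)}^{(k)})'\,dt-(Z_\pr^{(k)})'\Big),
\end{align*}
the first average converges almost surely, by the strong law, to $\E[(Z_\pr)']$, which equals $M_\pr$: indeed the Lipschitz bound with integrable $H$ dominates the difference quotients of $c\mapsto Z_c$ and thereby justifies differentiating under the expectation in $M_\pr=\tfrac{\partial}{\partial c^T}\E[Z_c]|_{c=\pr}$. For the second average, the Lipschitz bound gives the pointwise estimate $\big\|\int_0^1 (Z_{\pr+t(\tilde\pr_n-\pr)}^{(k)})'\,dt-(Z_\pr^{(k)})'\big\|\le\tfrac12 H_k\|\tilde\pr_n-\pr\|$, so that its norm is at most $\tfrac12\|\tilde\pr_n-\pr\|\cdot\frac1n\sum_{k=1}^n H_k$; here $\frac1n\sum_k H_k\to\E[H]<\infty$ by the strong law and $\|\tilde\pr_n-\pr\|\to0$ by consistency, so the second average vanishes in probability. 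Hence $A_n\stackrel{\mathbb P}{\to}M_\pr$, which is invertible by hypothesis, so $A_n$ is invertible with probability tending to one and $A_n^{-1}\stackrel{\mathbb P}{\to}M_\pr^{-1}$.

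Since $\E[Z_\pr]=0$ by \eqref{z_zero} and the $Z_\pr^{(k)}$ are i.i.d., the multivariate central limit theorem gives $\frac1{\sqrt n}\sum_{k=1}^n Z_\pr^{(k)}\stackrel{d}{\to}\mbox{N}(0,J_\pr)$ with $J_\pr=\E[Z_\pr Z_\pr^T]$. Solving the linearised identity as $\sqrt n(\tilde\pr_n-\pr)=-A_n^{-1}\,\frac1{\sqrt n}\sum_{k=1}^n Z_\pr^{(k)}$ and invoking Slutsky's theorem with $A_n^{-1}\stackrel{\mathbb P}{\to}M_\pr^{-1}$ then yields
\begin{align*}
\sqrt n(\tilde\pr_n-\pr)\stackrel{d}{\to}M_\pr^{-1}\,\mbox{N}(0,J_\pr)=\mbox{N}\big(0,M_\pr^{-1}J_\pr[M_\pr^{-1}]^T\big)=\mbox{N}(0,V_\pr),
\end{align*}
which is the claim.

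The main obstacle, and the precise reason the Lipschitz hypothesis is convenient here, is that the integrated-derivative matrix $A_n$ depends on the data-dependent and therefore non-i.i.d.\ argument $\tilde\pr_n$, so a naive law of large numbers does not apply to it directly. The Lipschitz bound resolves this by decoupling the randomness: it controls the deviation $A_n-\frac1n\sum_k (Z_\pr^{(k)})'$ uniformly in the unknown $\tilde\pr_n$ by the genuinely i.i.d.\ quantity $\frac1n\sum_k H_k$ times $\|\tilde\pr_n-\pr\|$, thereby replacing the delicate uniform-convergence (stochastic equicontinuity) argument that underlies the more general Theorem \ref{Asymp_norm_1} by two elementary applications of the strong law. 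The only other point requiring care is the interchange $M_\pr=\E[(Z_\pr)']$, which rests on the same integrable Lipschitz majorant together with integrability of $Z'_\pr$ (implicit in the assumed existence of $M_\pr$).
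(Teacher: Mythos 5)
Your proposal is correct and follows essentially the same route as the paper: linearise the estimating equation around $\pr$, control the derivative term via the Lipschitz bound and the strong law (so that the averaged derivative tends to $M_\pr$ in probability), and conclude by the central limit theorem for $\tfrac{1}{\sqrt n}\sum_k Z_\pr^{(k)}$ together with Slutsky's theorem. The only differences are technical refinements on your side --- using the integral form of the first-order remainder rather than the paper's mean value theorem (which is the cleaner device when $Z_c$ is vector-valued), explicitly localising to the event $\{\|\tilde\pr_n-\pr\|<\rho\}$, and justifying the interchange $M_\pr=\E[Z'_\pr]$ --- none of which changes the substance of the argument.
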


Notice that if the true parameter of the underlying distribution is $\pr_0$, then the above procedure incurs an asymptotic mean square error given by
\begin{align}\label{tradeoff}
\mbox{AMSE}(\widehat{\pr}_n)=\frac1n\, V_\pr+(\pr-\pr_0)^2.
\end{align}
This quantifies the tradeoff between the variance and the bias. If $n$ tends to infinity, the first term goes to 0, but the bias introduced through the informed censoring remains. On the other hand, for $\pr=\pr_0$ the bias term disappears, but in general $V_{\pr_0}>V_{\pr}$. Correspondingly, informed censoring will particularly be useful if one has only a small number $n$ of datapoints available, and one can determine a limiting value $n$ on from which one does not improve the AMSE with respect to the situation without informed censoring (see the following section for concrete computations in particular examples). 

\section{Density randomizations and localized informed censoring}\label{sec:dens_rand}
When an observation is uncertain, the uncertainty around a data point can often be expressed in terms of a probability density (for instance, due to measurement errors, see Section \ref{sec1}). This translates in our setting into $\int d\mu_k=1$. Often, it is a rather reasonable assumption to assume that datapoints are iid observations from \eqref{def:qmeasures} (in the above example for instance in the case that the measuring instruments are not updated once their precision is specified). 

{\color{black} Below we provide two examples where the asymptotic bias and variance can be calculated explicitly, in order to gain insight into how informed censoring fares against benchmark statistical procedures. Another example, using improper measures, will be provided later in Section \ref{sec5}, where the computation of $\int_{S_1}f_c(x)d\mu_1(x)$ is no longer explicit, and numerical integration routines are required (cf.\ Example \ref{tail_index_example}). All numerical results in this and the next sections were carried out in the \texttt{R} language.\footnote{In particular, integration and optimization was done using the \texttt{integrate} and \texttt{optimize} functions in \texttt{R}, respectively.}}

\begin{example}[Normal-normal]\rm
Assume that 
\begin{align*}
f_c(x)=\frac{1}{\sqrt{2\pi}\sigma_1}\exp(-(x-c)^2/(2\sigma_1^2)),\quad c\in\GG=\mathbb{R},
\end{align*}
with $\sigma_1$ known, and that we observe the random datapoints ('expert information')
\begin{align*}
\frac{d\mu_k(x)}{dx}=\frac{1}{\sqrt{2\pi}\sigma}\exp(-(x-X_k-Y_k)^2/(2\sigma^2)),\quad k=1\dots,n,
\end{align*}
with $X_k\sim\mbox{N}(\pr_0,\sigma_1)$ and $Y_k$ being noise random variables with mean $\varepsilon\in\mathbb{R}$.
It follows that 
\begin{align*}
\int_{S_1}f_c(x)d\mu_1(x)=\frac{1}{\sqrt{2\pi(\sigma_1^2+\sigma^2)}}\exp(-\frac12\{\frac{c^2}{\sigma_1^2}+\frac{(X_1+Y_1)^2}{\sigma^2}-(\frac{c}{\sigma_1^2}+\frac{X_1+Y_1}{\sigma^2})^2/(\frac{1}{\sigma_1^2}+\frac{1}{\sigma^2})\}),
\end{align*}
and hence
\begin{align*}
W_c&=\frac{1}{2}\left[\log(2\pi)+\log(\sigma_1^2+\sigma^2)+\frac{c^2}{\sigma_1^2}+\frac{(X_1+Y_1)^2}{\sigma^2}-(\frac{c}{\sigma_1^2}+\frac{X_1+Y_1}{\sigma^2})^2/(\frac{1}{\sigma_1^2}+\frac{1}{\sigma^2})\right],\\
Z_c&=\frac{c}{\sigma_1^2}-\frac{c\sigma^2}{\sigma_1^2(\sigma_1^2+\sigma^2)}-\frac{X_1+Y_1}{\sigma_1^2+\sigma^2}.
\end{align*}

By Theorem \ref{one_dim_theo}, we have consistency, and to the quantity 
\begin{align*}
\pr=\arg\min_{c\in\GG} \E[W_c]=\pr_0+\varepsilon.
\end{align*}
Notice that if the normal expert guesses are symmetric around the true observation, they incur no bias.

Further, we get
\begin{align*}
M_\pr&=\frac{\partial}{\partial c}\E[Z_c]\big|_{c=\pr}=\frac{1}{\sigma_1^2}\left[1-\frac{\sigma^2}{\sigma_1^2+\sigma^2}\right]=\frac{1}{\sigma_1^2+\sigma^2},\\
J_\pr&=(\pr_0+\varepsilon)^2\frac{(1-\frac{\sigma^2}{\sigma_1^2+\sigma^2})^2}{\sigma_1^4}-2(\pr_0+\varepsilon)^2\frac{1-\frac{\sigma^2}{\sigma_1^2+\sigma^2}}{\sigma_1^2(\sigma_1^2+\sigma^2)}+\E[(X_1+Y_1)^2]\frac{1}{(\sigma_1^2+\sigma^2)^2}\\
&=\frac{(\pr_0+\varepsilon)^2}{(\sigma_1^2+\sigma^2)^2}-2\frac{(\pr_0+\varepsilon)^2}{(\sigma_1^2+\sigma^2)^2}+\E[(X_1+Y_1)^2]\frac{1}{(\sigma_1^2+\sigma^2)^2}\\
&=\frac{\mbox{Var}(X_1+Y_1)}{(\sigma_1^2+\sigma^2)^2}
\end{align*}
and finally
\begin{align*}
V_\pr=J_\pr/M_\pr^2=\mbox{Var}(X_1+Y_1).
\end{align*}

In other words, not only does the asymptotic variance not depend on the parameter $\pr$, but it is also independent of the expert variance $\sigma^2$. If $Y_1\equiv0$, this remarkable fact can be interpreted in this setting as follows: if one believes that, on average, the expert guess is exactly centered in the true value, then the spread of the guess is immaterial to the estimation, and in particular we may impute the centerpoints of the guesses ($\sigma=0$) and perform standard maximum likelihood estimation.

The asymptotic mean square error for $\tilde\pr_n$ is given by
$$\mbox{AMSE}(\tilde\pr_n)=\frac1n \mbox{Var}(X_1+Y_1)+\E[Y_1]^2,$$
which shows the bias-variance tradeoff incurred by uncertainty in the centers of the expert guesses. If $(X_1,Y_1)$ is a Gaussian random vector, we get
$$\frac 1n(\sigma_1^2+2\sigma_2^2+\rho\sigma_1\sigma_2)+\epsilon^2,$$
which is optimized for $\varepsilon=0$ and $$\sigma_2^2=\max\{-\rho\sigma_1^2/4,0\}.$$
\end{example}

\begin{example}[Exponential-Gamma]\rm
Assume that 
\begin{align*}
f_c(x)=c\exp(-c x),\quad c\in\GG=\mathbb{R}_+,
\end{align*}
and that we observe the random data points ('expert information')
\begin{align*}
\frac{d\mu_k(x)}{dx}=\frac{\beta_k^{\alpha_k}}{\GG(\alpha_k)}x^{\alpha_k-1}\exp(-\beta_k x),\quad k=1\dots,n,
\end{align*}
with $X_k\sim\mbox{Exp}(\pr_0)$ and
\begin{align*}
\alpha_k=X_k/\sigma^2,\quad \beta_k=1/\sigma^2,\quad \sigma^2>0.
\end{align*}
In other words, the expert guesses are moment-matched gamma densities centered at $X_k$ and with variance $\sigma^2 X_k$, so the uncertainty of the expert information increases with the size of the observation.

It follows that 
\begin{align*}
\int_{S_1}f_c(x)d\mu_1(x)=c\frac{\beta_1^{\alpha_1}}{(\beta_1+c)^{\alpha_1}}=c\left(\frac{1}{1+\sigma^2 c}\right)^{X_1/\sigma^2},
\end{align*}
and hence
\begin{align*}
W_c&=\frac{X_1}{\sigma^2}\log(1+\sigma^2 c)-\log(c),\\
Z_c&=\frac{X_1}{1+\sigma^2 c}-\frac{1}{c}.
\end{align*}

By Theorem \ref{one_dim_theo}, we again have consistency, and to the quantity 
\begin{align*}
\pr=\arg\min_{c\in\GG} \E[W_c]\Leftrightarrow 0=-\frac{1}{c}+\frac{1}{\pr_0}\frac{1}{1+\sigma^2\pr}
\end{align*}
which immediately yields $$\pr=(1/\pr_0-\sigma^2)^{-1},$$
provided that $\sigma^2\pr_0<1$, a necessary assumption. 

Then we may calculate
\begin{align*}
M_\pr&=\frac{\partial}{\partial c}\E[Z_c]\big|_{c=\pr}=-(1/\pr_0-\sigma^2)^2(1+\pr_0\sigma^2),\\
J_\pr&=(1/\pr_0-\sigma^2)^2\E[(X_1\pr_0-1)^2]=(1/\pr_0-\sigma^2)^2.\\
\end{align*}
We collect the terms to obtain
\begin{align*}
V_\pr=J_\pr/M_\pr^2=\frac{1}{(1/\pr_0-\pr_0\sigma^4)^2}.
\end{align*}
For $\sigma=0$ we recover the well-known result for the exponential distribution. In all other cases, however, the variance is larger.

The asymptotic mean square error for $\tilde\pr_n$ is given by
$$\mbox{AMSE}(\tilde\pr_n)=\frac1n \frac{1}{(1/\pr_0-\pr_0\sigma^4)^2}+\frac{\pr_0^2\sigma^4}{(1/\pr_0-\sigma^2)^{2}}.$$
Notice that an oracle expert -- one who makes perfect predictions (i.e., $\sigma^2=0$) -- would establish an estimator $\tilde\pr_n^\circ$ with the asymptotic mean square error
$$\mbox{AMSE}(\tilde\pr_n^\circ)=\frac1n \,\pr_0^2.$$
We may then define the asymptotic efficiency of the expert, relative to the oracle, as
\begin{align*}
e(\tilde\pr_n|\tilde\pr_n^\circ)=\frac{\mbox{AMSE}(\tilde\pr_n)}{\mbox{AMSE}(\tilde\pr_n^\circ)},
\end{align*}
with the efficiency of $1$ being unachievable in practice. Poorer expert information will lead to poorer efficiency, for a given sample size $n$. We may further consider the required expert precision, as a function of a given pre-specified efficiency and sample size. In formulas, for $e\in[1,\infty)$, let $n\ge0$,
\begin{align}\label{effic1}
\sigma(e,n)\quad\mbox{be the solution to}\quad  e(\tilde\pr_n|\tilde\pr_n^\circ)=e.
\end{align}
A plot of such solutions is given in Figure \ref{fig:sigma_surface}, for $\pr_0=1/2$ in the above setup. Concretely, it depicts for every choice $e$ of efficiency and number $n$ of datapoints the limiting value $\sigma$ up to which one still benefits from including the noisy data information. 
\begin{figure}[!htbp]
\centering
\includegraphics[clip, trim=0cm 5cm 0cm 5cm,width=0.8\textwidth]{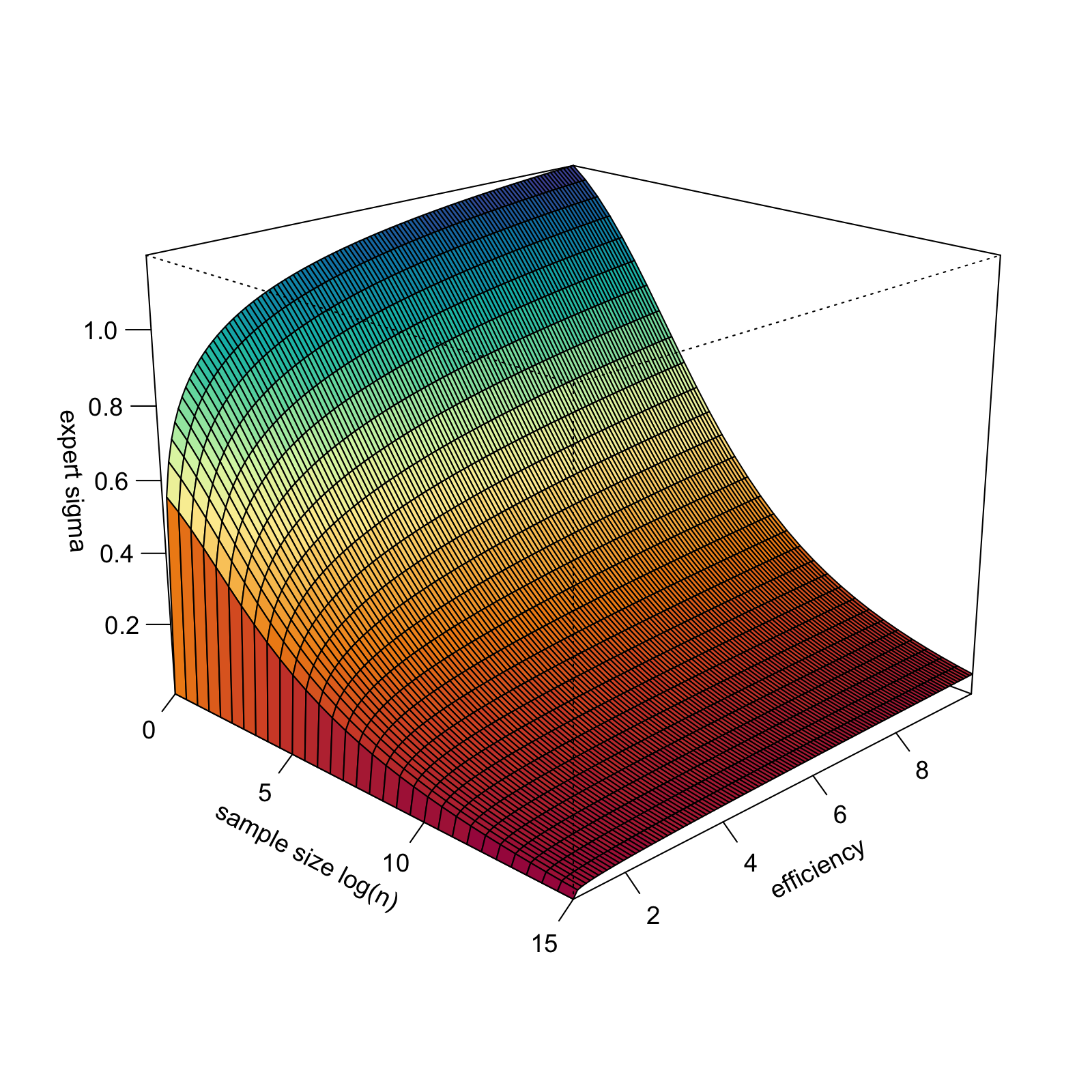}
\caption{Surface plot associated with solutions of Equation \eqref{effic1}, that is, the required expert precision $\sigma$ required to match a certain efficiency $e$ with respect to the oracle ASME, as a function of the sample size $n$.}
\label{fig:sigma_surface}
\end{figure}

An analysis of the required sample size $n$ required by expert information to match the efficiency of an oracle of sample size $n_0\le n$ is now available as a function of $\sigma$. Specifically, let
\begin{align}\label{effic2}
n(n_0,\sigma)\quad\mbox{be the solution to}\quad  e(\tilde\pr_n|\tilde\pr_{n_0}^\circ)=1.
\end{align}
A plot of such solutions is given in Figure \ref{fig:n_surface}, again for $\pr_0=1/2$ as above. As expected, larger $\sigma$ increases the required $n$ for a given $n_0$. Perhaps less obvious is the fact that this effect is exacerbated as $n_0$ grows. One may rationalize that intuitively as the statistical evidence kicking in to reduce the variance, but the expert bias staying constant (regardless of sample size) is ultimately detrimental.

\begin{figure}[!htbp]
\centering
\includegraphics[clip, trim=0cm 5cm 0cm 5cm,width=0.8\textwidth]{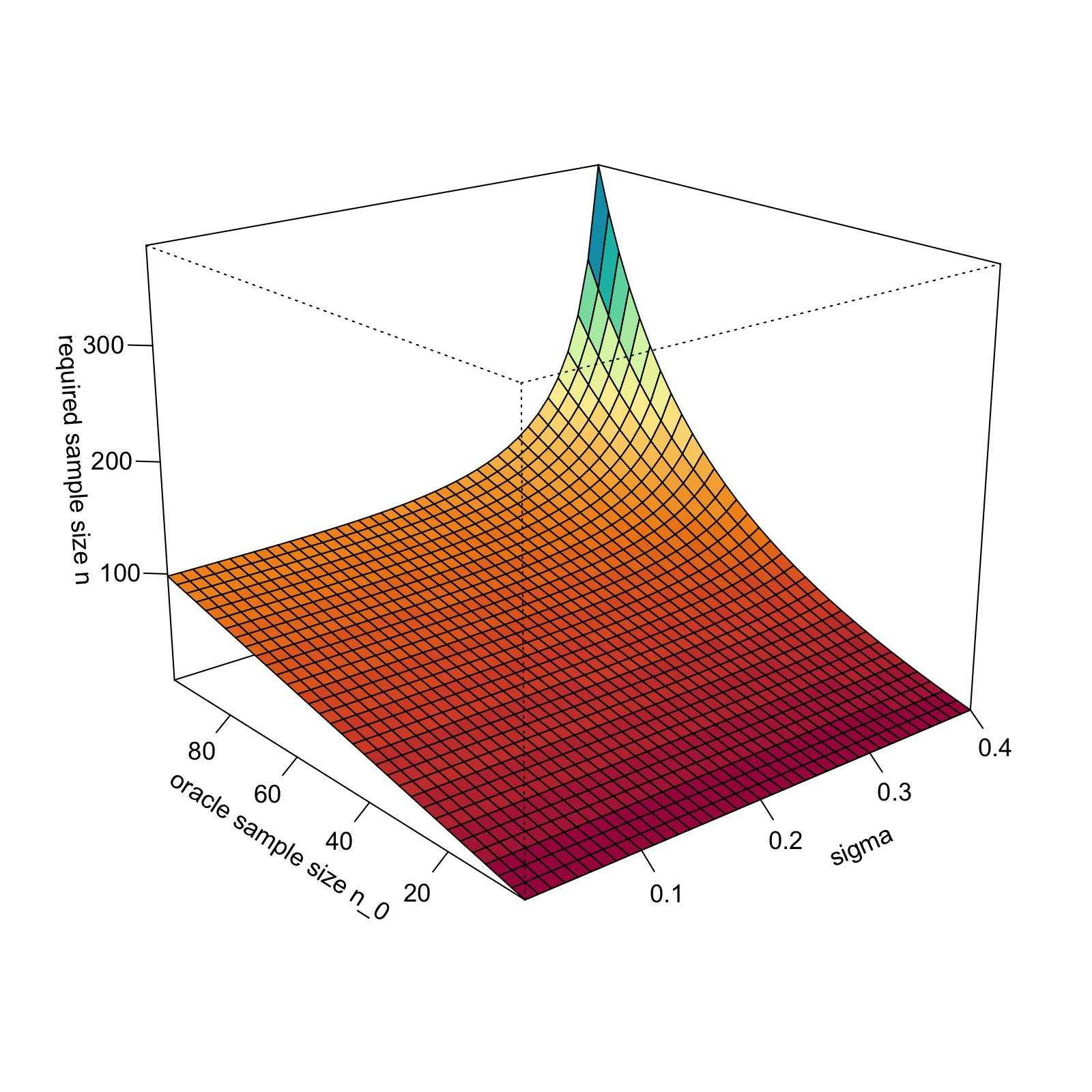}
\caption{Surface plot depicting the solution of Equation \eqref{effic2}: the required expert sample size $n$ to match the ASME of the oracle of sample size $n_0$, as a function of expert precision $\sigma$.}
\label{fig:n_surface}
\end{figure}
\end{example}

{\color{black} 
\begin{remark}[Numerical integration]\rm
Note that the computation of $\int_{S_1}f_c(x)d\mu_1(x)$ is the main ingredient in the above analysis, and also in the optimization procedure. The reader may find similarity with Bayesian computation of normalizing constants, though the present task is fundamentally different. For instance, in Bayesian statistics Markov Chain Monte Carlo (MCMC) methods provide a way to circumvent the calculation of such an integral, whereas no such amendment is possible to compute the perturbed likelihood, so that numerical routines are required in most cases.
\end{remark}
}

\section{Improper randomizations and unbounded informed censoring}\label{sec5}

We introduce the concept of improper randomization, corresponding to the case $\int\mu_k\in (1,\infty]$. This concept loses the interpretation of assigning probabilities towards the uncertainty quantification of datapoints. However, it is a necessary step toward a theory that generalizes right-censoring, a setting very common in biological, financial, and insurance applications, to name a few. 

Specifically, in classical survival analysis, $X$ is a random variable of interest with values in $[0,\infty)$. Often, $X$ is subject to right-censoring, namely, we observe
\begin{align}\label{eq:W_delta_classic}
W = X \wedge C, \quad \delta = \mathds{1}_{\{W = X\}},
\end{align}
where $C$ is another random variable (usually called the censoring mechanism) with values in $[0,\infty]$ describing right-censoring. If $\delta = 1$, then the variable $W$ is said to be \textit{observed}, while if $\delta = 0$, then the variable $W$ is said to be \textit{right-censored}. Letting $(X_k,C_k)_{k=1}^n$ be iid replicates of $(X,C)$, and constructing $W_k$ and $\delta_k$ according to~\eqref{eq:W_delta_classic}, the most common estimation method is to consider the survival log-likelihood
\begin{align*}
\tilde{\ell}_n(\pr)&:=\sum_{k=1}^n\delta_k\log\left(f_\pr(W_k)\right)+(1-\delta_k)\log \left(1-F_\pr(W_k)\right) \\
&=\sum_{k=1}^n\log \int_0^\infty f_\pr(x)\,\delta_i \cdot d\Delta^{W_k}(x) +\log\int_{0}^\infty f_\pr(x) (1-\delta_k)\mathds{1}(W_k\le x) dx,
\end{align*}
where $\Delta^{a}(x)$ denotes the Dirac measure at $a$. In our setting, the above estimation method corresponds to
\begin{align}\label{righ_censoring_mu}
\frac{d\mu_k(x)}{dx}=\delta_k\frac{d\Delta^{W_k}(x)}{dx}+(1-\delta_k)\mathds{1}(W_k\le x)
\end{align}
where $\frac{d\Delta^{W_k}(x)}{dx}$ is the Dirac density at $x$, defined only symbolically.

Observe that in this case, with probability $\P(\delta_1=0),$ the measure $\mu_k$ has a constant  Radon-Nykodym derivative on $[X_1,\infty)$, taking the value $1-\delta_1$. When dealing with informed censoring we now propose two different ways of adapting \eqref{righ_censoring_mu}.

%
%

\begin{example}[Measurement uncertainty]\rm
The simplest adaptation of \eqref{righ_censoring_mu} is to incorporate uncertainty of the measurements $W_k$. Then simply
\begin{align}\label{eq:measurement_uncertainty}
\frac{d\mu_k(x)}{dx}=\delta_kg_{k}{(x)}+(1-\delta_k)G_k(x),
\end{align}
where $G_k$ is a cumulative distribution function, with corresponding density $g_k$ which describes the spread around $W_k$. A less common scenario, which arises in some applications (for instance in life insurance durations), is that the $\delta_k$ are subject to uncertainty and expert information (cf. \cite{Bladt2022}). In that case, binary exogeneous variables $(\eta_k)_{k=1}^n$ -- expert guesses -- are provided, and so we may consider
\begin{align*}
\frac{d\mu_k(x)}{dx}=\eta_kg_{k}{(x)}+(1-\eta_k)G_k(x).
\end{align*}
Notice that, in general, $\delta_k$ and $W_k$ are dependent. A sensible statistical model will assume that, in \eqref{eq:measurement_uncertainty}, $G_k$ is a measurable function of at least $(W_k,\delta_k)$ (and some other noise variables), making their asymptotics untractable in closed form. In that case, bootstrap procedures might be preferable to assess the spread of the resulting estimators.
\end{example}

\begin{example}[Informed censoring]\rm\label{informed_censoring_example}
The adaptation of \eqref{righ_censoring_mu} to incorporate information about $X_i$ rather than $W_i$ is slightly more subtle than for the previous example. Fundamentally, the main difference is the interpretation of the expert information. If the expert information on $X_i$ is perfect, then imputation - or a smooth equivalent - can be handled with the density randomizations of Section \ref{sec:dens_rand}, and it is the obvious choice. However, densities do not degenerate into a positive constant over unbounded intervals, which is the requirement to specify fully censored data without any expert information. 

Consequently, to bridge between fully observed and fully censored datapoints using a parametric expert information measure $\tilde\mu_k(\cdot;\sigma)$, we require that it satisfies the following two conditions:
\begin{enumerate}
\item Degeneracy to the true value. As $\sigma\to0$,
\begin{align*}
\tilde\mu_k(\cdot;\sigma)\to \Delta^{X_k}(\cdot).
\end{align*}
\item Convergence to the improper uniform distribution on $[W_k,\infty)$. As $\sigma\to\infty$,
\begin{align*}
d\tilde\mu_k(\cdot;\sigma)\to \mathds{1}(W_k\le x)dx.
\end{align*}
\end{enumerate}
Putting the pieces together, we can construct the expert information measure as follows:
\begin{align*}\label{}
\mu_k=\delta_k\Delta^{W_k}+(1-\delta_k)\tilde\mu_k.
\end{align*}
\end{example}

{\color{black}
\begin{example}[Tail index estimation]\rm\label{tail_index_example}
An important example motivated from applications where heavy-tailed distributions arise naturally is the case corresponding to the Pareto$(c)$ distribution, with cdf given by
\begin{align}\label{pareto_dist}
F_c(x)=1-(x/x_0)^{-c},\quad x\ge x_0, \:\: c>0,
\end{align}
where $x_0$ is known for the specific situation (or might be simply estimated as the smallest value in the sample). One is usually interested in the tail index $\xi=1/c$, which measures the ''heaviness'' of the tail, with larger values corresponding to heavier tails. In addition to $(W_k,\delta_k)$, assume that we have access to expert guesses $Z_k$ for the size of right-censored observations. One such application coming from the insurance sector is provided in the next section in more detail. Then we may construct an improper measure satisfying the conditions of Example \ref{informed_censoring_example} by specifying the following parametric family of measures (indexed by $\sigma$):
\begin{align}\label{nu_density_example}
\frac{d\tilde\mu_k(x)}{dx}=\min\{\sigma^2,1\} +f_\Gamma(x-W_k+1;\,(Z_k-W_k+1)\sigma^2,\,1/\sigma^2) ,\quad x\ge W_k,
\end{align}
and zero otherwise, where $f_\Gamma(\cdot;\,a,\,b)$ is the Gamma density with shape parameter $a$ and rate parameter $b$. Any other choice satisfying the degeneracy and convergence conditions would work as well. In Figure \ref{nu_density} some of the possible shapes of the above density are depicted, for varying $\sigma$ values. 

\begin{figure}[!htbp]
\centering
\includegraphics[width=0.8\textwidth]{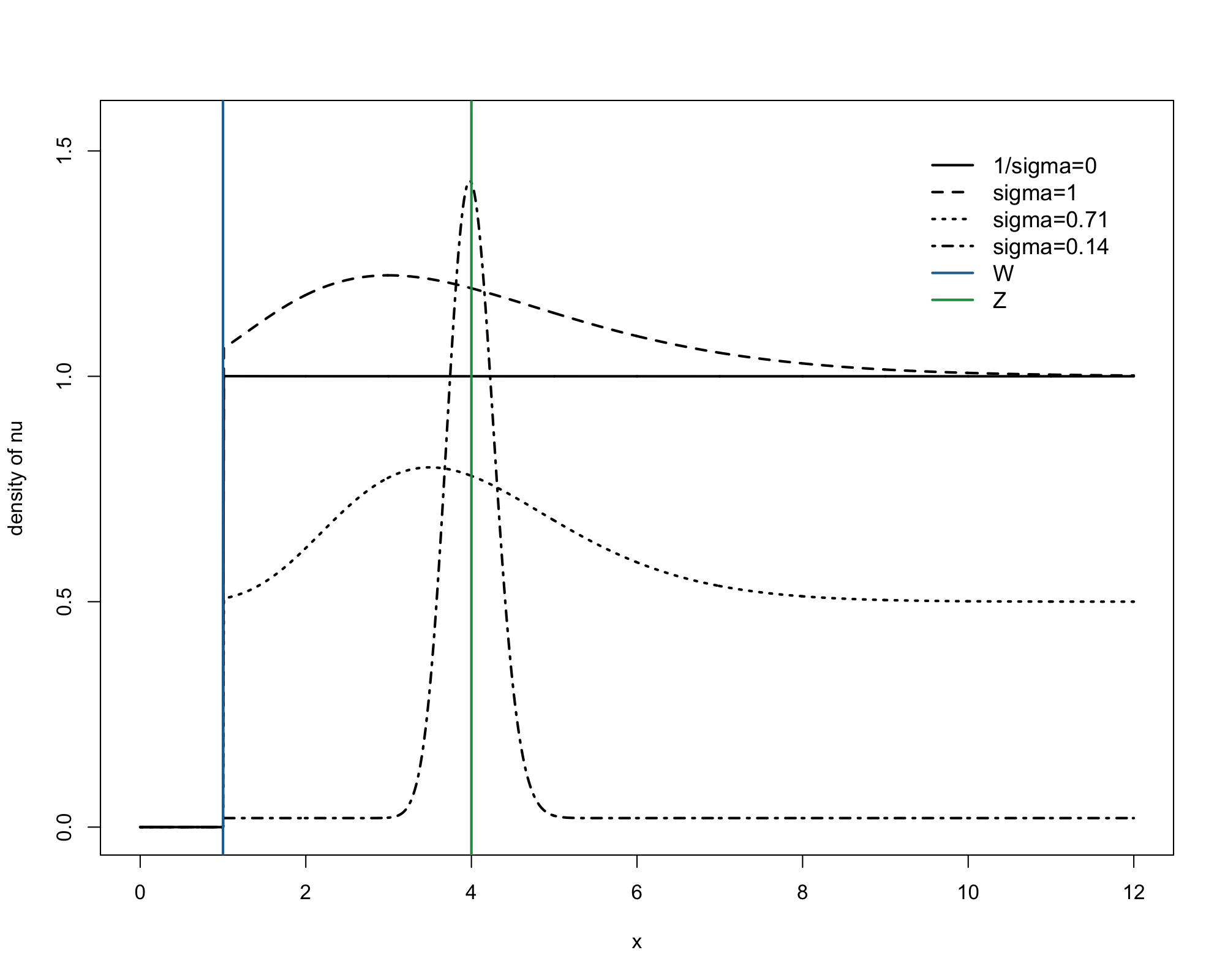}
\caption{Possible shapes for the function in \eqref{nu_density_example}, bridging the two important degenerate cases: the Dirac point mass at $Z$ and the non-informative uniform function on $[W,\infty)$.}
\label{nu_density}
\end{figure}

To illustrate the finite-sample behaviour of our estimator, in comparison to the benchmark cases $\sigma=0$ (imputation of the expert information as true observations) and $\sigma=\infty$ (discarding the expert information and considering a standard survival likelihood), we now perform a short simulation study. We simulate $N=1000$ samples of size $n=200$ with 
\begin{enumerate}
\item $X_k\sim \mbox{Pareto}(2)$,
\item $C_k\sim \mbox{Pareto}(1)$,
\item $Z_k=X_k$ if $\delta_k=1$, and otherwise $Z_k=\max\{X_k,\:X_k+N_k\}$,
\end{enumerate}
where $N_k\sim \mbox{N}(m,s)$, the normal distribution with mean $m$ and standard devation $s$. Thus, the quality of the expert information is modulated by $m$ and $s$. We then numerically explore three scenarios: $$(m,s)=(0,\:1/10),\:(2/10,\:1/10),\:(5/10,\:1/10),$$ corresponding to high, medium, and low expert precision. The results are summarized in Figure \ref{sim_study}. As expected, higher and lower expert precision scenarios steadily and smoothly pull the minimal MSE towards $\sigma=0$ and $\sigma=\infty$, respectively. In each case, there is a sweet spot somewhere in between, which indicates that for finite samples our informed censoring method has the capability of capturing the correct amount of expert information, compared to the benchmarks (located at the asymptotic left and asymptotic endpoints in Figure \ref{sim_study}).

\begin{figure}[!htbp]
\centering
\includegraphics[width=1\textwidth]{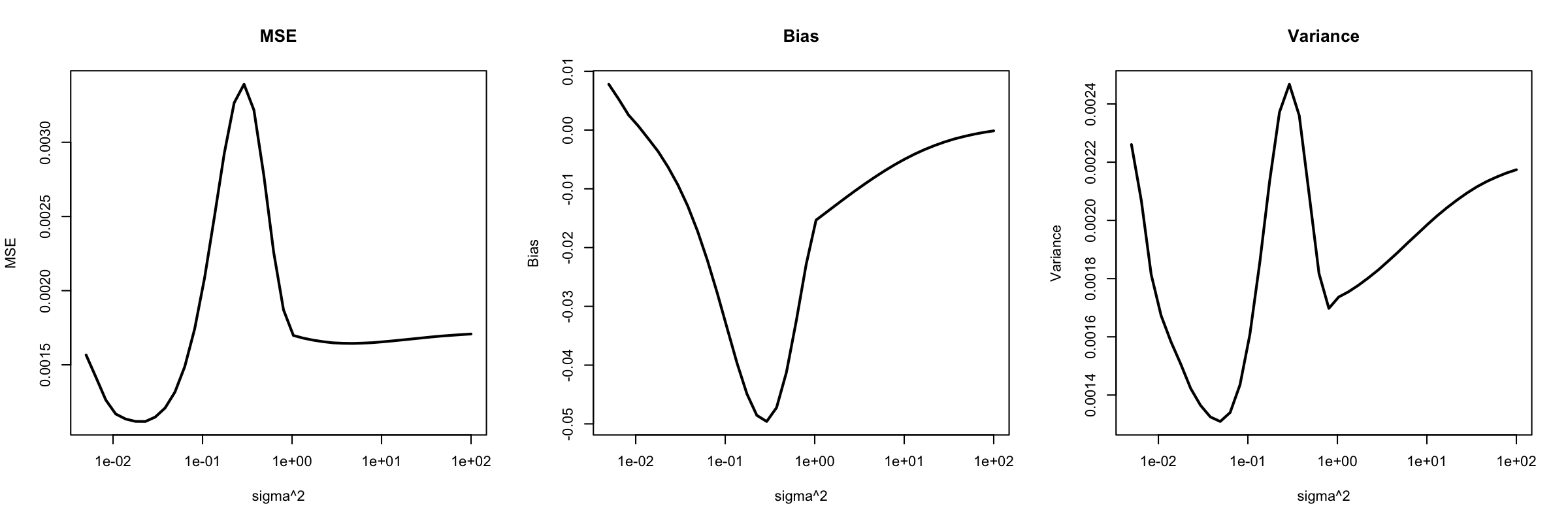}
\includegraphics[width=1\textwidth]{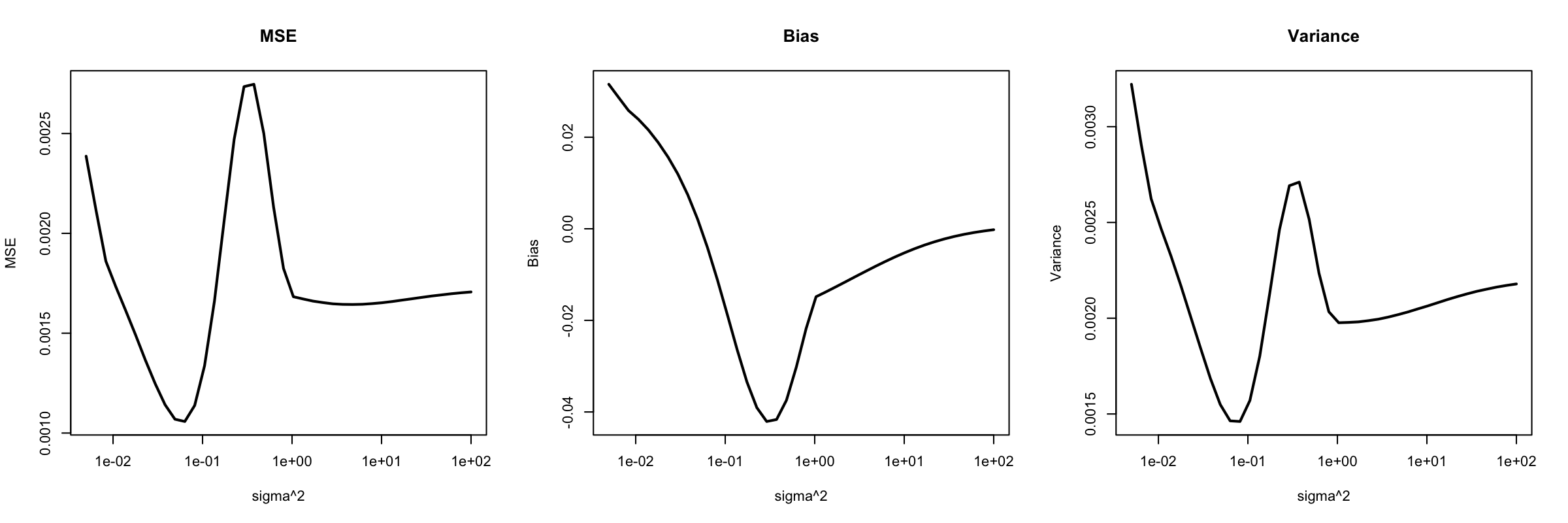}
\includegraphics[width=1\textwidth]{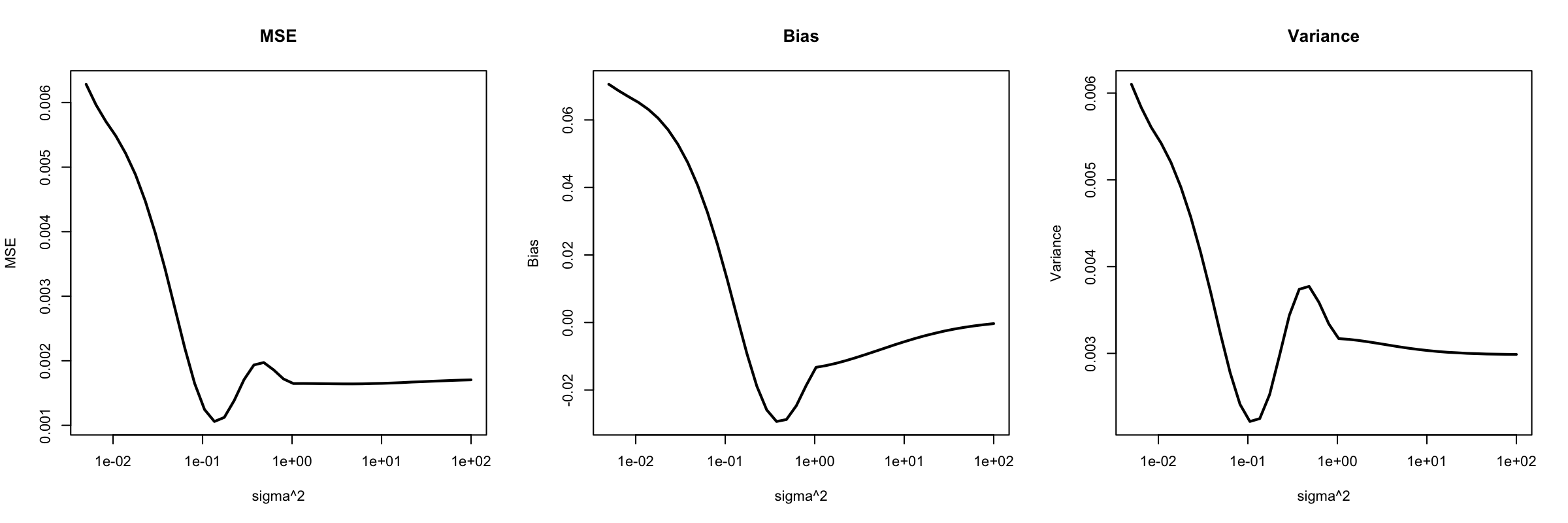}
\caption{{\color{black}Mean square error (MSE), bias and variance of the tail index estimators for varying expert precision. The top, middle and bottom panels correspond to, respectively,  expert information of high, medium, and low precision.}}
\label{sim_study}
\end{figure}
\end{example}
}

\section{Application to real data}\label{sec6}

We consider the Motor Third Party Liability (MTPL) insurance dataset which has previously been studied in  {\cite{abt} and \cite{combined}. The data is private and comes from a direct insurance company operating in the European Union. By 2010, there were $837$ policyholders in the dataset, each with the following three sources of information:
\begin{enumerate}
\item $W_k$: the yearly paid amount paid to the policyholder during the period 1995-2010. Some claims have not been settled yet, and thus are considered to be right-censored.\\
\item $\delta_k$: the censoring indicator, taking the value $1$ if the claim is settled, and $0$ otherwise. Roughly $40.1\%$ of the claims are settled.\\
\item $Z_k$: the \textit{ultimate} value of a claim size, which is the company's internal and undisclosed expert estimation of the eventual claim size for the right-censored cases. For the observed cases, simply $Z_k=W_k$.
\end{enumerate}
We divided $W_k$ and $Z_k$ by $10^6$ for numerical purposes. A careful description of the data can be found in \cite{combined}, where it is also argued that the data has a regularly varying (Pareto-type) tail. 
Presently, we take the latter tail behaviour as granted, and focus on the estimation when incorporating the ultimates information on an individual basis, following the methods of previous sections. Figure \ref{ultimates_vs_paid1} depicts the scatterplot $(W_k, Z_k)$, with the closed claims represented by circles and falling on the diagonal, and the open claims having a vertical segment between their paid amount (again on the diagonal) and their ultimate value (depicted by a triangle).

It is worth noting that the incorporation of the ultimates was studied previously in \cite{combined} without distinguishing expert information between policies, that is, on an aggregate level.

\begin{figure}[!htbp]
\centering
\includegraphics[width=\textwidth]{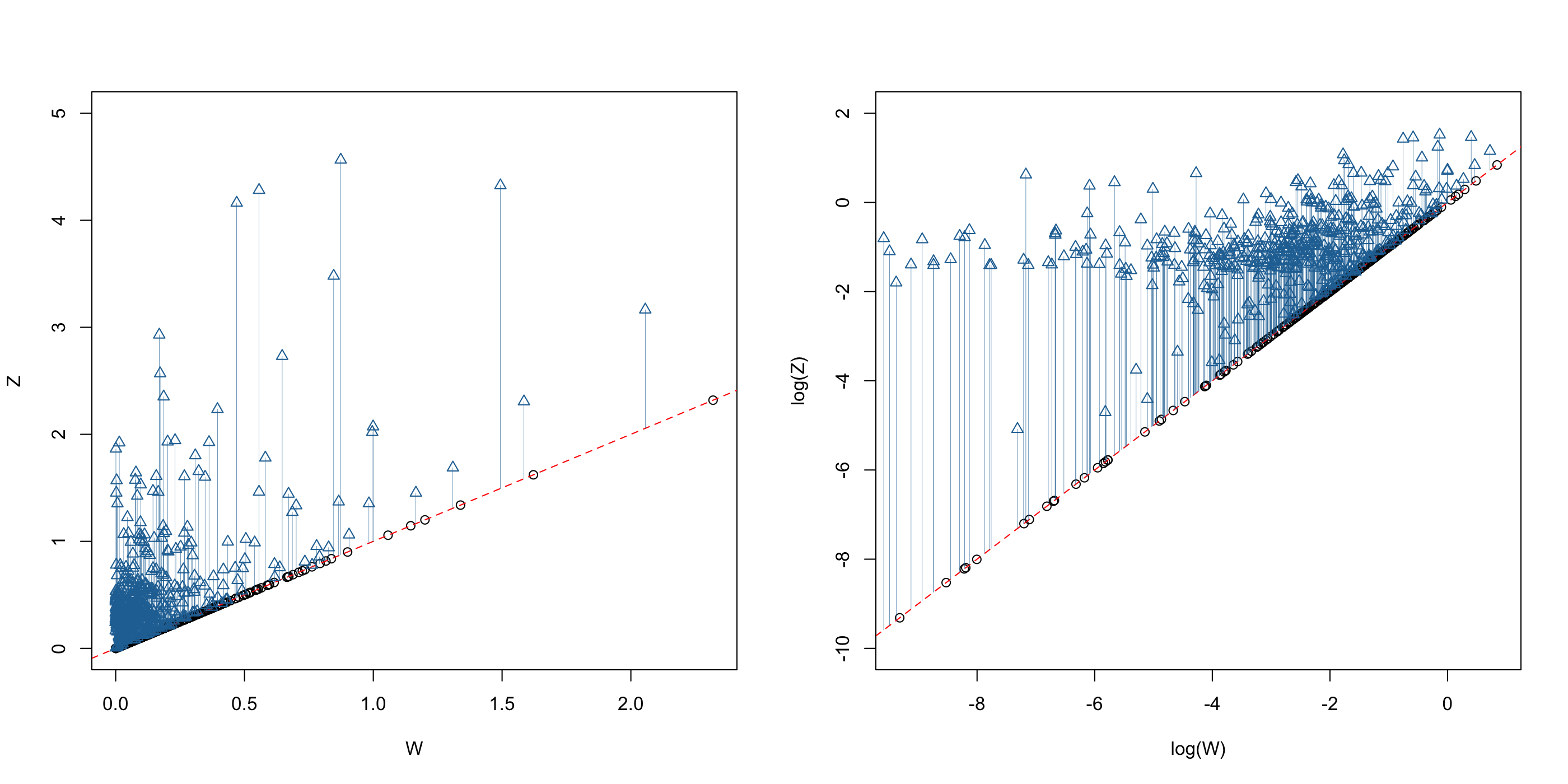}
\caption{Paid amounts $W_k$ versus ultimate values $Z_k$ for the $837$ datapoints in the MTPL insurance dataset. Closed claims are represented by circles and fall directly on the diagonal, while open claims have a vertical segment joining the paid amount (again on the diagonal) and the larger ultimate value, represented by a triangle.}
\label{ultimates_vs_paid1}
\end{figure}

{Since we do not know the quality of the expert information, we follow the approach of Examples \ref{informed_censoring_example} and \ref{tail_index_example}. For this purpose, we have tailored a measure $\tilde\mu_k$, specified by its Radon-Nykodym derivative as \eqref{nu_density_example}}
A priori, one may also want to include the development year of claim $k$ into the specification of $\tilde\mu_k$, but for the present purpose we consider it to be already sufficiently reflected in the values of $Z_k$ and $W_k$.

{\color{black}The distribution for the sample, as mentioned earlier, is chosen as Pareto, and specified by \eqref{pareto_dist}}.
However, this distribution holds only for the tail. Previous analysis (cf.\ \cite{combined}) has shown that the pure Pareto behaviour manifests itself when considering the top $70$ order statistics (or less). Consequently, we take the largest $69$ $W_k$ values and consider them as the new sample, with the scale parameter $x_0$ given by the $70$-th largest $W_k$. We also consider the corresponding concomitants for the $Z_k$ and $\delta_k$. 

We proceed to optimize $\ell_n$ for various values of $\sigma^2$, which incurs in different tail indices {\color{black}$1/c=\gamma=\gamma(\sigma^2)$}.  A curve of the reciprocal quantities (which is usually referred to as the tail index, with larger values indicating heavier tails) is provided in Figure \ref{alphas}. 

We observe something that might seem counterintuitive at first: the informed censored estimator bridges the imputation and full survival methods (and agrees in the extreme cases, by construction), but it does not always lie completely between the two of them. The explanation is that there is a different amount of mass contribution  $\log\int f(x)d\mu_k(x)$ to the loss function for different $\sigma$, which can tilt the estimation towards giving a larger (or smaller) importance  to the corresponding datapoint than in classical likelihood methods. In other words, the height (and not only the shape) of the curves in Figure \ref{nu_density} matters.

The fact that the combined tail index $0.65$ suggested in previous studies (for the same sample fraction) is outside of the imputation and pure survival band is remarkable. However, this has a simple explanation, namely that despite having taken the concomitant with respect to the top order statistics of $W_k$, the ultimates in the new sample may not correspond to the top largest ultimate values. This also highlights the uniqueness and refinedness of incorporating expert information on an individual basis, rather than on the aggregate one.

We also consider the same tail index analysis when the Radon-Nykodym derivative of $\nu_k$ is instead given by {\color{black} the parametric family}
\begin{align}\label{nu_density_example2}
\frac{d\tilde\mu_k(x)}{dx}=\min\{\sigma^2,1\} +f_\Gamma(x;\,Z_k\sigma^2,\,1/\sigma^2) ,\quad x\ge W_k,
\end{align}
which implies an expert variance that changes linearly with the size of the ultimates (in contrast to the differences between ultimates and paid amounts), as for larger ranges the estimate of the ultimates may exhibit more uncertainty. However, both variance specifications lead to very similar tail estimators, implying robustness to misspecification of the parametric expert variance formula (we omit the corresponding plot of this second specification, as it is so similar to the one in Figure \ref{alphas}).

\begin{figure}[!htbp]
\centering
\includegraphics[width=0.8\textwidth]{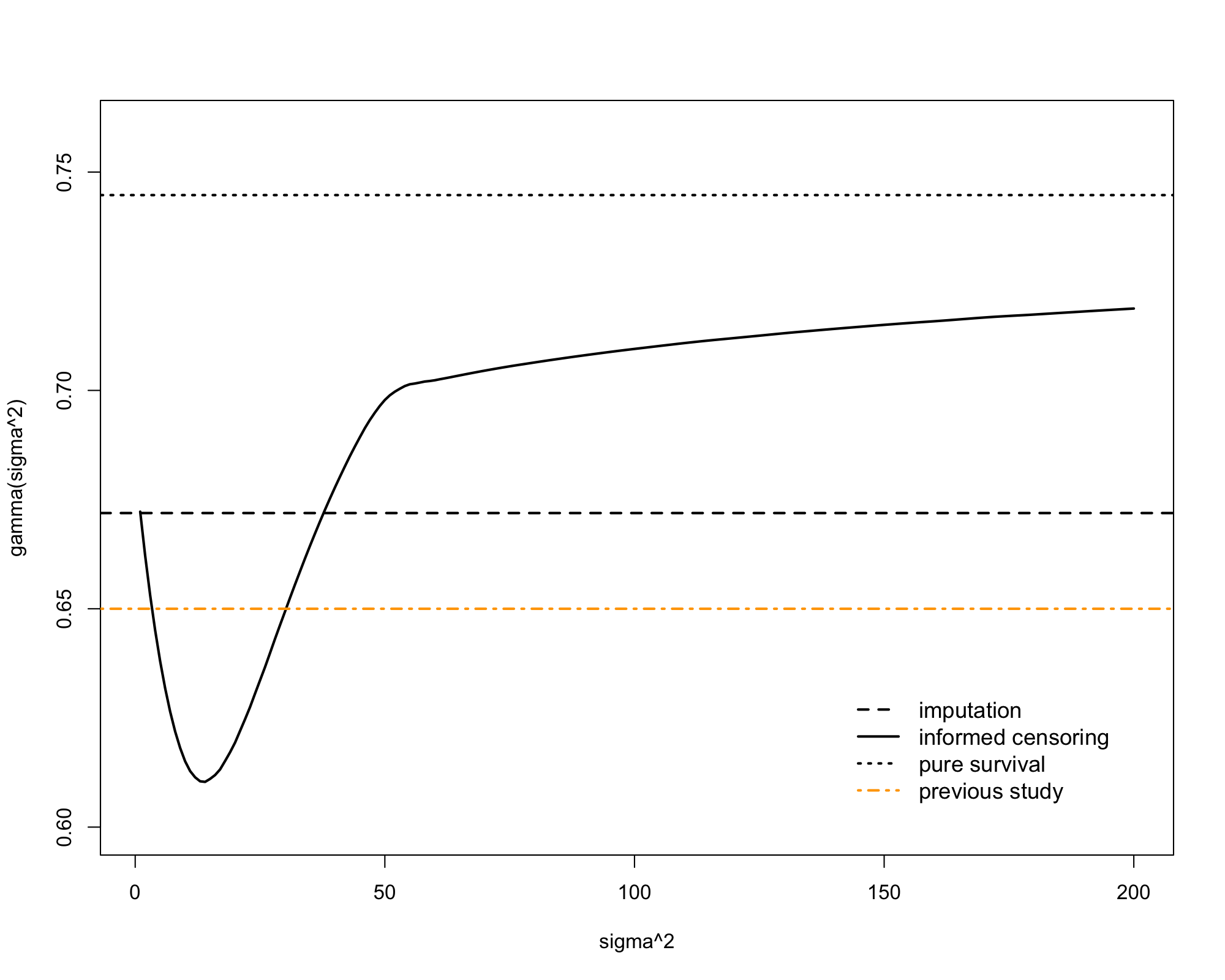}
\caption{The estimated tail index as a function of the expert guess variance $\sigma^2$, together with the value obtained by imputation and by the pure survival approach, as well as the value obtained in a previous study.}
\label{alphas}
\end{figure}

\section{Conclusion and outlook}\label{seccon}
We have provided a statistical framework in which the incorporation of expert information into parametric models is intuitive and hopefully practically useful for both statisticians and practitioners. The setup is general so that it covers varied applications. That said, a very natural context where our setup is useful is the insurance field, where expert judgement is very common.

{\color{black} From a statistical point of view, the} asymptotic theory builds on well-known M-statistics methods of proof, though we have tailor-made the results for easier application to the random measure setup, where the contributions in the loss function may be measurable with respect to any i.i.d. sequence of random elements. Further extensions to the case where the expert information refines with the sample size, and the introduction of covariates are both of great significance and subject of further study. In that context, triangular arrays of row-wise independent random measures will need to be considered.

{\color{black} From a practical perspective, and in particular for mathematically-oriented reserving actuaries, the proposed model provides a new way of dealing with ultimate values in the reserving process. By incorporating these values into a survival analysis setting, and furthermore incorporating the uncertainty of these values through a random measure (which needs to be provided by the practitioner), the accuracy of the projected future losses can increase drastically, as was shown in theoretical and numerical examples. Finally, for an MTPL dataset we have shown how the tail index (a key risk measure for insurance losses) may be chosen in accordance to the degree of expert belief. A related non-parametric approach, which complements the present work, has recently been explored in \cite{Bladt2022}, where the expert information is of a different kind.}


{\color{black} An important shortcoming that needs to be addressed in future work is the automatic choice of the quality of the expert information when only point estimates are available, that is, when the expert cannot provide a random measure for some/all points. Concretely, when a random measure is not available, but only summarizing statistics from it, say a quantile or its mean, then
the inverse problem -- determining expert precision from the best predictive model (say, using cross-validation) -- is an interesting dual specification of our model that may merit closer investigation.}

While the storyline of the illustration focuses on the actuarial application, we would like to reiterate that the proposed framework should be useful in other application areas as well, and it will be interesting in future research to ponder that potential, and possibly sharpen some of the results in the direction of the needs of the respective particular application.

\appendix
\section{Proofs}\label{proofs}

\begin{proof}[Proof of Theorem \ref{consistency_theo}]
Let $\delta>0$ and $c\in\GG$, and define
$$w(\delta,c)=\sup_{c_1\in\GG,\:||c-c_1||<\delta}\{|W_c-W_{c_1}|\}.$$
By continuity we have that $w(\delta,c)\to0$ almost surely, as $\delta\downarrow0$. Continuous functions on compact intervals are bounded, so by dominated convergence we obtain $$ \lim_{\delta\to0}\E[w(\delta,c)]\to0.$$
In other words, given $\varepsilon>0,$ there exists $\delta(c)>0$ such that $\E[w(\delta(c),c)]<\varepsilon/2.$ We define $B_c=\{c_1\in \R^p:||c-c_1||<\delta(c)\},$ so that $\{B_c\}_{c\in \GG}$ is a covering of $\GG$ with open sets. Since $\GG$ is compact, there esists a finite sub-covering $B_{c_1}, \dots,B_{c_m}$, and for each $c\in B_{c_j}$ we get $$|W_c-W_{c_j}|\le w(\delta(c_j),c_j).$$ 
Then, almost surely,
\begin{align*}
\sup_{c\in\GG}\left| \frac 1n \sum_{k=1}^nW_c^{(k)}- \E[W_c]\right|&\le\max_{1\le j\le m}\left| \frac 1n \sum_{k=1}^nW_{c_j}^{(k)}- \E[W_{c_j}]\right|\\
&\quad +\max_{1\le j\le m}\left\{ \frac 1n \sum_{k=1}^nw^{(k)}(\delta(c_j),c_j)\right\}\\
&\quad +\max_{1\le j\le m}\left\{ \E[w(\delta(c_j),c_j)]\right\},
\end{align*}
and the latter converges to $2\max_{1\le j\le m}\left\{ \E[w(\delta(c_j),c_j)]\right\}<\varepsilon$.
With this convergence we can then assert 
\begin{align*}
0\le\E[W_{\hat\pr_n}-W_\pr]&= \frac 1n \sum_{k=1}^n(W_{\hat\pr_n}^{(k)}-W_{\pr}^{(k)})-\frac 1n \sum_{k=1}^n(W_{\hat\pr_n}^{(k)}-W_{\pr}^{(k)})+\E[W_{\hat\pr_n}-W_{\pr}]\\
&\le -\frac 1n \sum_{k=1}^n(W_{\hat\pr_n}^{(k)}-W_{\pr}^{(k)})+\E[W_{\hat\pr_n}-W_{\pr}]\\
&\le\left|\frac 1n \sum_{k=1}^n(W_{\hat\pr_n}^{(k)}-\E[W_{\hat\pr_n}])\right|+\left|\frac 1n \sum_{k=1}^n(W_{\pr}^{(k)}-\E[W_{\pr}])\right|\\
&\le 2\sup_{c\in\GG}\left| \frac 1n \sum_{k=1}^nW_c^{(k)}- \E[W_c]\right|\to0,
\end{align*}
as $n\to \infty$, which proves the first claim. 

Now, assume $\pr$ is well-separated. Then if $\hat\pr_n$ does not converge to $\pr$, there exists $\varepsilon_0>0$ such that for infinitely many $n$, $||\hat\pr_n-\pr||>\varepsilon_0$, which by \eqref{def:well_sep} implies that $\E[W_{\hat\pr_n}]$ does not converge to $\E[W_\pr],$ a contradiction. Thus, $\hat\pr_n\to\pr$, establishing the second claim.
\end{proof}

\begin{proof}[Proof of Theorem \ref{one_dim_theo}]
Given $0<\varepsilon<\delta$, the the strong law of large numbers, there exists almost surely an $n$ such that 
\begin{align*}
\frac1n \sum_{k=1}^nZ_{\pr-\varepsilon}^{(k)} <0< \frac1n \sum_{k=1}^nZ_{\pr+\varepsilon}^{(k)},
\end{align*}
which by continuity implies that that there exists a $\tilde\pr_n$ with $ \sum_{k=1}^nZ_{\tilde\pr_n}^{(k)}=0$ and $|\tilde\pr_n-\pr|<\varepsilon$.
\end{proof}

\begin{proof}[Proof of Theorem \ref{Asymp_norm_1}]
By definition we have that
\begin{align*}
\sum_{k=1}^nZ_{\tilde\pr_n}^{(k)}=0,\quad\E[Z_\pr]=0.
\end{align*}
Note now that by differentiability,
\begin{align*}
\E[Z_{\tilde\pr_n}-Z_\pr]=M_\pr(\tilde\pr_n-\pr)+A_{n,\pr},
\end{align*}
with $A_{n,\pr}/||\tilde\pr_n-\pr||\to0$ in probability. Now, by asymptotic continuity,
\begin{align*}
\frac1n \sum_{k=1}^n(Z_{\tilde\pr_n}^{(k)}-\E[Z_{\tilde\pr_n}])=\frac{\nu_n(\tilde\pr_n)}{\sqrt{n}}&=\frac{\nu_n(\pr)}{\sqrt{n}}+B_{n,\pr}\\
&=\frac1n \sum_{k=1}^nZ_{\pr}^{(k)}-\E[Z_{\pr}]+B_{n,\pr},
\end{align*}
with $\sqrt{n}B_{n,\pr}\to 0$ in probability.

Consequently, and since $\E[Z_\pr]=0$.
\begin{align*}
0&=\E[Z_{\tilde\pr_n}-Z_\pr]+\frac1n \sum_{k=1}^n(Z_{\tilde\pr_n}^{(k)}-\E[Z_{\tilde\pr_n}])\\
&=M_\pr(\tilde\pr_n-\pr)+\frac1n \sum_{k=1}^nZ_{\pr}^{(k)}+A_{n,\pr}+B_{n,\pr}.
\end{align*}
By the central limit theorem, $\nu_n(\pr)$ converges in distribution to a Normal random variable. The above equality then implies that necessarily $\sqrt{n}||\tilde\pr_n-\pr||$ is uniformly tight. Thus
$$0=\sqrt{n}M_\pr(\tilde\pr_n-\pr)+\nu_n(\pr)+C_{\pr,n},$$
with $C_{\pr,n}\to0$ in probability. Then by Slutsky's theorem we finally obtain that $\sqrt{n}(\tilde\pr_n-\pr)$ has the same asymptotic limit as $-M_\pr^{-1}\nu_n(\pr)$.

To identify the parameters of the distribution, we note that by the central limit theorem, $-\nu_n(\pr)$, is asymptotically $\mbox{N}(0,J_\pr)$, and so applying the Delta method yields 
$$\sqrt{n}(\tilde\pr_n-\pr)\stackrel{d}{\to}\mbox{N}(0,M_\pr^{-1}J_\pr [M_\pr^{-1}]^T),$$
as desired.
\end{proof}

\begin{proof}[Proof of Theorem \ref{asymt_norm_lips}]
By the mean value theorem, 
\begin{align*}
0=\frac{1}{n}\sum_{k=1}^nZ_{\tilde\pr_n}^{(k)}=\frac{1}{n}\sum_{k=1}^nZ_{\pr}^{(k)}+(\tilde\pr_n-\pr)\frac{1}{n}\sum_{k=1}^n{Z'_{G}}^{(k)},
\end{align*}
for a random variable $G$ with $$||G-\pr||\le||\tilde\pr_n-\pr||.$$ Then writing
\begin{align*}
0=\frac{1}{n}\sum_{k=1}^nZ_{\pr}^{(k)}+(\tilde\pr_n-\pr)\frac{1}{n}\sum_{k=1}^n{Z'_{\pr}}^{(k)}+
(\tilde\pr_n-\pr)\frac{1}{n}\sum_{k=1}^n({Z'_{G}}^{(k)}-{Z'_{\pr}}^{(k)}),
\end{align*}
we obtain
\begin{align*}
\left|\frac{1}{n}\sum_{k=1}^nZ_{\pr}^{(k)}+(\tilde\pr_n-\pr)\frac{1}{n}\sum_{k=1}^n{Z'_{\pr}}^{(k)}\right|&\le
||\tilde\pr_n-\pr||\cdot||G-\pr||\frac{1}{n}\sum_{k=1}^nH^{(k)},\\
&\le A_{n}||\tilde\pr_n-\pr||^2
\end{align*}
where $A_n=\frac{1}{n}\sum_{k=1}^nH^{(k)}$ is bounded in probability, since it converges to a constant by the strong law of large numbers. In the same way,
\begin{align*}
\frac{1}{n}\sum_{k=1}^n{Z'_{\pr}}^{(k)}=\E[Z'_\pr]+B_n=M_\pr+B_n,
\end{align*}
where $B_n$ converges to zero in probability.
Consequently,
\begin{align*}
\frac{1}{||\tilde\pr_n-\pr||^2}\left|\frac{1}{n}\sum_{k=1}^nZ_{\pr}^{(k)}+(\tilde\pr_n-\pr)[M_\pr+B_n]\right|
\end{align*}
is bounded in probability. Now since $\frac{1}{\sqrt{n}}\sum_{k=1}^nZ_{\pr}^{(k)}$ is also bounded in probability, the same is true for the first-term in its Taylor expansion: $\sqrt{n}(\tilde\pr_n-\pr)$. Thus
\begin{align*}
n\left|\frac{1}{n}\sum_{k=1}^nZ_{\pr}^{(k)}+(\tilde\pr_n-\pr)[M_\pr+B_n]\right|=n\left|\frac{1}{n}\sum_{k=1}^nZ_{\pr}^{(k)}+(\tilde\pr_n-\pr)M_\pr+C_\pr\right|
\end{align*}
is uniformly tight, where $\sqrt{n}C_\pr$ converges to zero in probability. This implies that 
$$\frac{1}{n}\sum_{k=1}^nZ_{\pr}^{(k)}=-(\tilde\pr_n-\pr)M_\pr+\tilde{C}_\pr,$$
with $\tilde{C}_\pr\sim C_\pr.$ The rest of the proof is then identical to the end of the proof of Theorem \ref{Asymp_norm_1}.
\end{proof}

\textbf{Acknowledgement.} 
The authors would like to acknowledge financial support from the Swiss National Science Foundation Project 200021\_191984.

\textbf{Declaration.} The authors declare no conflict of interest related to the current manuscript.

\bibliographystyle{apalike}
\bibliography{informed_censoring.bib}

\begin{thebibliography}{}

\bibitem[Albrecher et~al., 2017]{abt}
Albrecher, H., Beirlant, J., and Teugels, J.~L. (2017).
\newblock {\em Reinsurance: actuarial and statistical aspects}.
\newblock Wiley Series in Probability and Statistics. John Wiley \& Sons, Inc.,
  Hoboken, NJ.

\bibitem[Arbenz and Canestraro, 2012]{arbenz2012estimating}
Arbenz, P. and Canestraro, D. (2012).
\newblock Estimating copulas for insurance from scarce observations, expert
  opinion and prior information: a bayesian approach.
\newblock {\em ASTIN Bulletin: The Journal of the IAA}, 42(1):271--290.

\bibitem[Ashcroft et~al., 2016]{ashcroft2016expert}
Ashcroft, M., Austin, R., Barnes, K., MacDonald, D., Makin, S., Morgan, S.,
  Taylor, R., and Scolley, P. (2016).
\newblock Expert judgement.
\newblock {\em British Actuarial Journal}, 21(2):314--363.

\bibitem[Beirlant et~al., 2004]{BGST2004}
Beirlant, J., Goegebeur, Y., Segers, J., and Teugels, J. (2004).
\newblock {\em Statistics of Extremes: Theory and Applications}.
\newblock Wiley.

\bibitem[Bladt et~al., 2020]{combined}
Bladt, M., Albrecher, H., and Beirlant, J. (2020).
\newblock Combined tail estimation using censored data and expert information.
\newblock {\em Scandinavian Actuarial Journal}, (6):503--525.

\bibitem[Bladt and Furrer, 2023]{Bladt2022}
Bladt, M. and Furrer, C. (2023).
\newblock {Expert Kaplan--Meier estimation}.
\newblock {\em Scandinavian Actuarial Journal}, pages 1--27.

\bibitem[Bogaerts et~al., 2018]{lesaffre}
Bogaerts, K., Komarek, A., and Lesaffre, E. (2018).
\newblock {\em Survival analysis with interval-censored data}.
\newblock Boca Raton: Chapman \& Hall.

\bibitem[Bojke et~al., 2021]{bojke2021developing}
Bojke, L., Soares, M., Claxton, K., Colson, A., Fox, A., Jackson, C., Jankovic,
  D., Morton, A., Sharples, L., and Taylor, A. (2021).
\newblock Developing a reference protocol for structured expert elicitation in
  health-care decision-making: a mixed-methods study.
\newblock {\em Health Technology Assessment}, 25(37):1.

\bibitem[Cook and Kosorok, 2004]{CookKosorok2004}
Cook, T. and Kosorok, M. (2004).
\newblock {Analysis of Time-to-Event Data with Incomplete Event Adjudication}.
\newblock {\em Journal of the American Statistical Association},
  99(468):1140--1152.

\bibitem[Goegebeur et~al., 2023]{goegebeur2023conditional}
Goegebeur, Y., Guillou, A., and Qin, J. (2023).
\newblock Conditional tail moment and reinsurance premium estimation under
  random right censoring.
\newblock {\em Test}, pages 1--21.

\bibitem[Heitjan, 1993]{heitjan1993ignorability}
Heitjan, D.~F. (1993).
\newblock Ignorability and coarse data: Some biomedical examples.
\newblock {\em Biometrics}, pages 1099--1109.

\bibitem[Heitjan and Rubin, 1991]{heitjan1991ignorability}
Heitjan, D.~F. and Rubin, D.~B. (1991).
\newblock Ignorability and coarse data.
\newblock {\em {The Annals of Statistics}}, pages 2244--2253.

\bibitem[Hurlbert et~al., 2019]{hurlbert2019risk}
Hurlbert, M., Krishnaswamy, J., Johnson, F.~X., Rodr{\'\i}guez-Morales, J.~E.,
  and Zommers, Z. (2019).
\newblock Risk management and decision making in relation to sustainable
  development.
\newblock In {\em Climate Change and Land: an IPCC special report on climate
  change, desertification, land degradation, sustainable land management, food
  security, and greenhouse gas fluxes in terrestrial ecosystems}, pages
  673--800. Intergovernmental Panel on Climate Change.

\bibitem[{Julia Flores} et~al., 2011]{JULIAFLORES2011181}
{Julia Flores}, M., Nicholson, A.~E., Brunskill, A., Korb, K.~B., and Mascaro,
  S. (2011).
\newblock Incorporating expert knowledge when learning bayesian network
  structure: A medical case study.
\newblock {\em Artificial Intelligence in Medicine}, 53(3):181--204.

\bibitem[K{\"u}nsch, 1997]{kunsch1997mathematische}
K{\"u}nsch, H. (1997).
\newblock {\em Mathematische {S}tatistik}, volume~98 of {\em Skript zur
  Vorlesung vom Wintersemester}.
\newblock ETH Zurich.

\bibitem[Leung et~al., 1997]{leung1997censoring}
Leung, K.-M., Elashoff, R.~M., and Afifi, A.~A. (1997).
\newblock Censoring issues in survival analysis.
\newblock {\em Annual Review of Public Health}, 18(1):83--104.

\bibitem[Mach et~al., 2017]{Mach2017}
Mach, K., Mastrandrea, M., Freeman, P., and Field, C. (2017).
\newblock Unleashing expert judgment in assessment.
\newblock {\em Global Environmental Change}, 44:1--14.

\bibitem[Quigley and Walls, 2021]{quigley2021characteristics}
Quigley, J. and Walls, L. (2021).
\newblock Characteristics of a process for subjective probability elicitation.
\newblock {\em Expert judgement in risk and decision analysis}, pages 287--318.

\bibitem[Tibshirani, 1996]{tibshirani1996regression}
Tibshirani, R. (1996).
\newblock Regression shrinkage and selection via the lasso.
\newblock {\em Journal of the Royal Statistical Society: Series B
  (Methodological)}, 58(1):267--288.

\bibitem[Tredger et~al., 2016]{Tredger2016}
Tredger, E., Lo, J., Haria, S., Lau, H., Bonello, N., Hlavka, B., and Scullion,
  C. (2016).
\newblock Bias, guess and expert judgement in actuarial work.
\newblock {\em British Actuarial Journal}, 21(3):545--599.

\bibitem[{V}an~de Geer, 2010]{svg}
{V}an~de Geer, S. (2010).
\newblock {\em Lecture Notes in Mathematical Statistics}.
\newblock ETH Zurich.

\bibitem[{V}an~der Vaart, 2000]{van2000asymptotic}
{V}an~der Vaart, A.~W. (2000).
\newblock {\em Asymptotic Statistics}, volume~3.
\newblock Cambridge University Press.

\bibitem[Wang and Xiao, 2022]{wang2022estimation}
Wang, Y. and Xiao, Z. (2022).
\newblock Estimation and inference about tail features with tail censored data.
\newblock {\em Journal of Econometrics}, 230(2):363--387.

\bibitem[Worms and Worms, 2021]{worms2021estimation}
Worms, J. and Worms, R. (2021).
\newblock Estimation of extremes for heavy-tailed and light-tailed
  distributions in the presence of random censoring.
\newblock {\em Statistics}, 55(5):979--1017.

\end{thebibliography}

%
%
%
%
\end{document}